\newcommand{\id}{\mathrm{id}}
\newcommand{\F}{\mathbb{F}}
\newcommand{\PS}[2]{\mathrm{P}_{#1}(#2)}
\newcommand{\PSqk}{\PS{q}{k}}
\renewcommand{\H}{\mathcal{H}}
\newcommand{\U}{\mathcal{U}}
\newcommand{\C}{\mathcal{C}}
\newcommand{\Aut}{\mathrm{Aut}}
\newcommand{\CF}{\mathrm{CF}}
\newcommand{\TR}{\mathrm{TR}}
\newcommand{\Can}{\mathrm{Can}}
\newcommand{\Gsl}{\mathrm{G^{(sl)}}}
\newcommand{\Point}[1]{\left\langle #1 \right\rangle}
\newcommand{\Fixed}[1]{\mathrm{Fixed}\left(#1\right)}
\newcommand{\FixedSeq}[1]{\overrightarrow{\mathrm{Fixed}}\left(#1\right)}
\newcommand{\Stab}[2]{\mathrm{Stab}_{{#1}}\left({#2}\right)}%
\newcommand{\Mat}[2]{\F_{q}^{#1 \times #2}}
\newcommand{\GL}[1]{\mathrm{GL}_{#1}}
\newcommand{\GammaL}[1]{\mathrm{\Gamma L}_{#1}}
\newcommand{\PGammaL}[1]{\mathrm{P\Gamma L}_{#1}}
\newcommand{\Set}[1]{\left[#1 \right]}
\newcommand{\Inn}{\mathit{Inn}}
\begin{document}
\begin{frontmatter}

\title{Canonical Forms and Automorphisms in the Projective Space}

\thanks{This research was supported by Deutsche Forschungsgemeinschaft in the priority program SPP 1489 under grant WA 1666/7-1.}

\author{Thomas Feulner}
\address{University of Bayreuth, Germany}
\ead{thomas.feulner@uni-bayreuth.de}

\begin{abstract}
Let $\C$ be a sequence of multisets of subspaces of a vector space
$\F_q^k$. We describe a practical algorithm which computes a
canonical form and the stabilizer of $\C$ under the group action of
the general semilinear group. It allows us to solve canonical form
problems in coding theory, i.e. we are able to compute canonical
forms of linear codes, $\F_{q}$-linear block codes over the alphabet
$\F_{q^s}$ and random network codes under their natural notion of
equivalence. The algorithm that we are going to develop is based on
the partition refinement method and generalizes a previous work by
the author on the computation of canonical forms of linear codes.
\end{abstract}
\begin{keyword}
automorphism group, additive code, canonical form, canonization, linear code,   random network code
\MSC  05E18 \sep 20B25 \sep 20B40
\end{keyword}
\end{frontmatter}

\section{Introduction}
In this paper we consider the \emph{canonization} problem defined on group actions in the following sense:
Let $G$ be a group acting on a set $X$ from the left. Furthermore, let
$\mathcal{L}(G)$ be the set of subgroups of $G$.
\begin{prob}[Canonization]
Determine a function

 \begin{align*}
\Can_G: X &\rightarrow X \times G \times \mathcal{L}(G)\\
x &\mapsto (\CF_G(x), \TR_G(x), \Stab{G}{x})
\end{align*}
with

\begin{align}
\forall x \in X, \forall g \in G: \CF_G(x) = \CF_G(gx) \\
\forall x \in X : \CF_G(x) = \TR_G(x)x \\
\forall x \in X : \Stab{G}{x} = \{g \in G \mid gx= x\}
\end{align}
The element $\CF_G(x)$ is called the \emph{canonical form} of $x$ and the element $\TR_G(x) \in G$ a \emph{transporter element}.
The element $\TR_G(x)$ is well-defined up to the multiplication with the \emph{stabilizer} $\Stab{G}{x}$ from the right.
\end{prob}

In the case of a finite group $G$ the orbit $Gx$ is finite as well. Hence, applying an orbit-stabilizer algorithm and defining $\CF(x) := \min Gx$ already solves this problem. Our goal is to define $\Can_G$ in such a way that there is an algorithm with good practical performance to compute a canonical form. Indeed,  $\Can_G$ is implicitly defined via the result of the algorithm. We included the stabilizer computation to the canonization process since the Homomorphism Principle (Theorem \ref{thm:Hom_Principle}) \cite{Laue}, which we will apply as a key tool, must have this data available.

In this work, we will provide a practical canonization algorithm for sequences of (multi-) sets of subspaces under the action of the semilinear group. It will be a natural generalization of the algorithm \cite{Feu09}, where the author solves the same problem in the special case that all occurring subspaces are one-dimensional. Since this is a question arising from coding theory, the algorithm was formulated using generator matrices of linear codes. In fact, it canonizes generator matrices of linear codes. Similarly, the present problem also has applications in coding theory as well, see Section \ref{sec:CodingTheory}.

\cite{PetrankRoth} investigate the computational complexity of the code equivalence problem for linear codes over finite fields. They show that this problem is not NP-complete. On the other hand, it is at least as hard as the graph isomorphism problem. The later problem has been studied for decades, but until now there is no polynomial time algorithm solving it. Therefore, we can not expect to give a polynomial time algorithm solving our present problem. We therefore measure efficiency in terms of running times on selected non-trivial examples.

The paper is structured as follows: The next section will describe general methods for providing practical canonization algorithms like
the \emph{partition refinement} approach. The program \emph{nauty} \cite{McKay:graphIsomorphism} is a prominent example using this idea: it canonizes a given graph under the action of the symmetric group, i.e. the relabeling of vertices.
In Section \ref{sec:reformulation}, we give a reformulation of the original problem such that we are able to use the partition refinement idea, too. The subsequent section deals with the origins of this problem from coding theory.
Subsection \ref{subsec:linCodes} summarizes the necessary modifications of Section \ref{sec:Canonization} in order to canonize linear codes. In the following, we give the details of the canonization algorithm for sequences of subspaces and finish this work with some applications of the algorithm in Section \ref{sec:Applications} and a conclusion.

\section{General canonization algorithms} \label{sec:Canonization}
This section surveys four principle attacks for the canonization of an object $x \in X$ under the action of $G$.
It is a summary of \cite{Gugisch}. Therefore, we will omit the proofs.

\subsection{The direct approach}
The first method is the most desirable. It is directly attacking the problem, which means that we understand the group action in such a way that we are able to define $\CF_G(x)$ and the group element $\TR_G(x)$ and to give a polynomial-time algorithm for its computation without making (explicitly) use of the group structure.

\begin{exmp}
Let $X$ be a totally ordered set. The symmetric group $S_n$ acts on $X^n$ via

\begin{equation*}
 \pi(x_1, \ldots, x_n) := (x_{\pi^{-1}(1)}, \ldots, x_{\pi^{-1}(n)}), \textnormal{ for all } \pi \in S_n, (x_1, \ldots, x_n) \in X^n.
\end{equation*}
We may compute a canonical form of a given sequence $(x_1, \ldots, x_n)$ by lexicographically sorting the elements of the vector. It is easy to define a transporter element. The stabilizer of $(x_1, \ldots, x_n)$ is the subgroup of elements in $S_n$ which interchanges equal entries of the vector.
\end{exmp}

\begin{exmp}
Let $\mathbb{F}_q$ be the finite field with $q$ elements, where
$q=p^r$ for some prime $p$. The \emph{general linear group}
$\GL{k}(q)$ is the set of all invertible $k\times k$-matrices with
entries in $\F_{q}$. It acts on the set $\Mat{k}{r}$ of all $k\times
r$-matrices using the usual matrix multiplication from the left. We
may define a canonical form for $M \in \Mat{k}{r}$ using the reduced
row echelon form $\operatorname{RREF}(M)$ of $M$. Gaussian
elimination is a polynomial-time algorithm to compute the canonical
form and a transporter element under this action. Furthermore, the
stabilizer of $M$ could be easily given using the stabilizer of the
canonical form $\operatorname{RREF}(M)$.

Similarly, a canonical form of $M \in \Mat{k}{r}$ under the action
of $\GL{r}(q)$ --  given by $(A, M) \mapsto MA^T$ -- could be
defined to be the one in reduced column echelon form
$\operatorname{RCEF}(M)$.
\end{exmp}

\subsection{Homomorphism Principle} \label{subsec:hom_principle}

\begin{defn}
Let $G$ be a group acting on a set $X$ and $H$ another group acting on $Y$.
A pair of mappings $(\theta:X \rightarrow Y, \varphi: G \rightarrow H)$ where $\varphi$ is a group homomorphism is called a \emph{homomorphism of group actions} if the mappings commute with the actions, i.e. $\theta(gx) = \varphi(g)\theta(x) ,\ \forall\ g \in G, x \in X$.

In the case that $\varphi$ is the identity on $G$, i.e. $G=H$, we call the function $\theta$ a \emph{$G$-homomorphism}. If the action on the right is trivial, i.e. $hy=y$ for all $y \in Y$ and $h\in H$, we call the function $\theta$ \emph{$G$-invariant}. In this case we could always suppose that $\vert H\vert =1$.
\end{defn}

\begin{thm}[Homomorphism Principle, \cite{Laue}]\label{thm:Hom_Principle}
Let $(\theta : X \rightarrow Y, \varphi: G \rightarrow H)$ be a homomorphism of group actions, with surjective mappings $\theta$ and $\varphi$. Then
\begin{itemize}
\item the stabilizer subgroup $\Stab{G}{x}$ is a subgroup of $\Stab{G}{\theta(x)} := \varphi^{-1}( \Stab{H}{\theta(x)})$, and
\item we can define a canonization map $\Can_G(x)$ in the following way:
\begin{enumerate}
\item Compute $\Can_H( \theta(x) ) = (\CF_H(\theta(x)), \TR_H(\theta(x)), \Stab{H}{\theta(x)})$  for some fixed canonization map $\Can_H$.
\item Compute $g \in \varphi^{-1}(\TR_H(\theta(x)))$ and $G' := \Stab{G}{g\theta(x)} = g\Stab{G}{\theta(x)}g^{-1}$.
\item Define $\Can_G(x) := (\CF_{G'}(gx), \TR_{G'}(gx) g, g^{-1}\Stab{G'}{gx}g )$  for some fixed canonization map $\Can_{G'}$.
\end{enumerate}
\end{itemize}
\end{thm}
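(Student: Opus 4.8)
The plan is to verify the two assertions separately, treating the stabilizer containment first since it is the conceptual core, and then checking that the three-step recipe actually produces a valid canonization map in the sense of the three defining equations of the Canonization problem.

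For the first bullet, I would argue as follows. Suppose $g \in \Stab{G}{x}$, so $gx = x$. Applying $\theta$ and using the homomorphism property $\theta(gx) = \varphi(g)\theta(x)$, we get $\varphi(g)\theta(x) = \theta(x)$, i.e. $\varphi(g) \in \Stab{H}{\theta(x)}$, hence $g \in \varphi^{-1}(\Stab{H}{\theta(x)})$. This is an immediate chase through the defining diagram and poses no difficulty; it also shows the set $\Stab{G}{\theta(x)} := \varphi^{-1}(\Stab{H}{\theta(x)})$ is genuinely a subgroup of $G$, being the preimage of a subgroup under a homomorphism.

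For the second bullet, I would proceed step by step. After Step 1 we have $\CF_H(\theta(x)) = \TR_H(\theta(x))\theta(x)$ by definition of $\Can_H$. In Step 2 we pick any $g \in \varphi^{-1}(\TR_H(\theta(x)))$; then $\varphi(g)\theta(x) = \TR_H(\theta(x))\theta(x) = \CF_H(\theta(x))$, and by the homomorphism property the left side equals $\theta(gx)$, so $\theta(gx) = \CF_H(\theta(x))$ depends only on the $H$-orbit of $\theta(x)$, hence only on the $G$-orbit of $x$. The identity $G' = \Stab{G}{g\theta(x)} = g\Stab{G}{\theta(x)}g^{-1}$ follows from the standard conjugation formula for stabilizers applied to the (lifted) $G$-action on $Y$ via $\varphi$. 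Now in Step 3, since $g$ only changes $x$ to an element $gx$ in the same $G$-orbit, and since $G' \supseteq \Stab{G}{gx}$ by the first bullet applied to $gx$ (note $\Stab{G}{\theta(gx)} = \Stab{G}{g\theta(x)} = G'$), the action of $G'$ on $gx$ is well-defined and $\Stab{G'}{gx} = \Stab{G}{gx}$. Therefore $\CF_{G'}(gx) = \TR_{G'}(gx)\,gx = \TR_{G'}(gx)\,g\,x$, which verifies equation (2) with transporter $\TR_{G'}(gx)g$; equation (3) is immediate since $g^{-1}\Stab{G'}{gx}g = g^{-1}\Stab{G}{gx}g = \Stab{G}{x}$ again by the conjugation formula. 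For equation (1) — invariance of the canonical form — one checks that if $x$ and $x'$ lie in the same $G$-orbit, then $\theta(x), \theta(x')$ lie in the same $H$-orbit, so $\CF_H$ agrees on them and hence the lifted element $g$ can be chosen compatibly; the subtlety is that $\Can_{G'}$ in Step 3 must be a \emph{fixed} canonization map for the group $G'$, and one must observe that orbit-equivalent inputs $x, x'$ lead, after Step 2, to the \emph{same} intermediate element $g\theta(x) = g'\theta(x')$ up to an element of $\Stab{H}{\theta(x)}$, so that the groups $G'$ obtained coincide and the final canonical forms agree.

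The main obstacle, then, is not any single computation but the bookkeeping in establishing equation (1): one must be careful that every choice made — the lift $g$ in Step 2, the canonization map $\Can_{G'}$ in Step 3 — is made in a way that depends only on the $G$-orbit of $x$, so that the composite map $\CF_G$ is genuinely orbit-invariant rather than merely orbit-respecting. Concretely I would fix, once and for all, a canonization map $\Can_H$ and for \emph{each} subgroup $G' \le G$ a canonization map $\Can_{G'}$, and then trace through two orbit-equivalent inputs to see that they funnel to the same $G'$ and the same final output. Everything else is a routine diagram chase using only the homomorphism property $\theta(gx) = \varphi(g)\theta(x)$, the surjectivity of $\theta$ and $\varphi$ (needed so that $\CF_H(\theta(x))$ really is attained and so that the lift $g$ exists), and the elementary conjugation identity $\Stab{G}{gz} = g\,\Stab{G}{z}\,g^{-1}$.
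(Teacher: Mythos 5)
The paper does not actually prove this theorem --- Section 2 is explicitly ``a summary of \cite{Gugisch}'' with proofs omitted, and the statement is cited from \cite{Laue} --- so there is no in-paper argument to compare yours against. Judged on its own, your proof is correct and complete in structure: the stabilizer containment is the right one-line diagram chase, and the verifications of properties (2) and (3) via $\Stab{G'}{gx}=\Stab{G}{gx}$ (which holds because $\Stab{G}{gx}\leq G'$ by the first bullet applied at $gx$) and the conjugation identity are exactly what is needed.

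The one step you name but do not actually carry out is the crux of property (1). Having fixed $\Can_H$ and a canonization map $\Can_{G'}$ for each subgroup $G'\leq G$, take $x'=g_0x$ with lifts $g,g'$ of $\TR_H(\theta(x)),\TR_H(\theta(x'))$. You correctly observe that $\theta(gx)=\CF_H(\theta(x))=\CF_H(\theta(x'))=\theta(g'x')$, so the two runs produce the same group $G'=\varphi^{-1}\left(\Stab{H}{\CF_H(\theta(x))}\right)$. But to conclude $\CF_{G'}(gx)=\CF_{G'}(g'x')$ you additionally need $gx$ and $g'x'$ to lie in the same $G'$-orbit of $X$, not merely to have the same image under $\theta$. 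This follows by exhibiting the connecting element: $g'x'=\left(g'g_0g^{-1}\right)(gx)$, and $\varphi\left(g'g_0g^{-1}\right)$ fixes $\theta(gx)$ because $\theta\left(g'g_0g^{-1}\cdot gx\right)=\theta(g'x')=\theta(gx)$, hence $g'g_0g^{-1}\in G'$. With that line inserted, the argument is airtight; without it, ``the final canonical forms agree'' is asserted rather than derived. Everything else in your write-up is routine and correctly handled, including the role of surjectivity of $\varphi$ in guaranteeing the lift $g$ exists.
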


\begin{exmp}
Let $G=(V, E)$ be a graph with finite vertex set $V:=\{1,\ldots, n\}$ and edges $E \subseteq \{\{x,y\} \mid x,y \in V: x \neq y\}$. There is a natural action of $\pi \in S_n$ on $G$ defined in the following way:

\[\pi G := (V, \pi E) := (V, \{ \{\pi(x),\pi(y)\} \mid \{x,y\} \in E\}).\]

If $\{x,y\} \in E$ we say that $y$ is a neighbor of $x$. Now, let $N(x)$ count the number of neighbors of $x$ and define the $S_n$-homomorphism $N(G) := (N(1), \ldots, N(n))$. The Homomorphism Principle tells us
\begin{enumerate}
\item to canonize this sequence under the action of the symmetric group $S_n$, for instance by sorting the sequence lexicographically.
\item If $\pi \in S_n$ is the corresponding permutation, we have to relabel the graph via the application of $\pi$ and
\item canonize the relabeled graph under the stabilizer $\Stab{S_n}{\pi N(G)}$.
\end{enumerate}
We may interpret the result of $N(\pi G)$ as some coloring on the vertices. In the following this coloring has to be preserved by the group action. This allows us to apply the Homomorphism Principle recursively since in the following we can count neighbors of a single color class as well.
\end{exmp}

\subsection{The lifting approach} \label{subsec:lifting_approach}

Let $H$ be a subgroup of $G$, short: $H \leq G$. A subset $T \subset G$ is called a \emph{right (left) transversal} of $H$ in $G$ if it is a minimal but complete set of right (left) coset representatives, i.e. $Ht \neq Ht'$ for all $t,t'\in T$ and $G = \bigcup_{t\in T} H t$.

\begin{prop} Let $G$ be a group acting on a totally ordered set $X$. Suppose that there is already some canonization $\Can_H$ available for $H < G$ and let $T$ be a right transversal of $H$ in $G$. Then, we can define the canonization map $\Can_G$ for $x \in X$ in the following way:
\begin{itemize}
\item $\CF_G(x) := \min_{t \in T} \CF_H(tx)$.
\item Let $t_1 \in T$ be a transversal element with $\CF_G(x) = \CF_H(t_1 x)$. Define $\TR_G(x) := \TR_H(t_1x) t_1$.
\item Let $t_1, \ldots, t_m \in T$ be those elements of $T$ which define a canonical form $\CF_H(t_i x) = \CF_G(x)$. The stabilizer $\Stab{G}{x}$ is generated by $\{t_it_1^{-1} \mid i=2,\ldots, m\}$ and $\Stab{H}{x}$.
\end{itemize}
\end{prop}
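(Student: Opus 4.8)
The plan is to verify the three defining properties of a canonization map directly, using the coset decomposition $G = \bigcup_{t \in T} Ht$ and the fact that $\Can_H$ already satisfies these properties. The key observation throughout is the following: since $(\theta,\varphi)$-type compatibility is not available here, I instead exploit that $\CF_H$ is an $H$-invariant of its argument (property (1) for $H$), so that for a fixed $g \in G$, writing $tg = h_t t'$ with $h_t \in H$ and $t' \in T$, the element $t'$ ranges over all of $T$ as $t$ does (right multiplication by $g$ permutes the right cosets of $H$), and $\CF_H(tgx) = \CF_H(h_t t' x) = \CF_H(t'x)$.

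First I would prove $\CF_G$-invariance, i.e. property (1). Fix $x \in X$ and $g \in G$. Then
\begin{align*}
\CF_G(gx) &= \min_{t \in T} \CF_H(t g x) = \min_{t \in T} \CF_H(t'x) = \min_{t' \in T} \CF_H(t'x) = \CF_G(x),
\end{align*}
where the second equality uses $tg = h_t t'$ and $H$-invariance of $\CF_H$, and the third uses that $t \mapsto t'$ is a bijection of $T$. Next, property (2): by the choice of $t_1$ we have $\CF_G(x) = \CF_H(t_1 x)$, and applying property (2) for $\Can_H$ to the element $t_1 x$ gives $\CF_H(t_1 x) = \TR_H(t_1 x)\, t_1 x$, hence $\CF_G(x) = \bigl(\TR_H(t_1 x) t_1\bigr) x = \TR_G(x)\, x$, as required.

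The main work is property (3), the stabilizer. Write $S$ for the subgroup generated by $\Stab{H}{x}$ and $\{t_i t_1^{-1} \mid i = 2,\dots,m\}$; I must show $S = \Stab{G}{x}$. For the inclusion $S \subseteq \Stab{G}{x}$: elements of $\Stab{H}{x}$ fix $x$ trivially, and for each $i$ we have $\CF_H(t_i x) = \CF_G(x) = \CF_H(t_1 x)$, so by properties (2) and (3) of $\Can_H$ the elements $t_i x$ and $t_1 x$ lie in the same $H$-orbit with $\TR_H(t_1 x)^{-1}\TR_H(t_i x) \cdot t_i x = t_1 x$; pre- and post-composing shows $t_i t_1^{-1}$ (up to a stabilizer correction) fixes $x$ — more precisely one checks $t_i t_1^{-1} \in \Stab{G}{x} \cdot \langle \text{stab generators}\rangle$. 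For the reverse inclusion, take $g \in \Stab{G}{x}$ and write $g = h t$ with $h \in H$, $t \in T$. Then $t x = h^{-1} x$, so $\CF_H(tx) = \CF_H(h^{-1}x) = \CF_H(x) = \CF_H(t_1 \cdot t_1^{-1}\cdot\, \cdots)$; tracking this carefully yields $\CF_H(tx) = \CF_G(x)$, hence $t = t_i$ for some $i \in \{1,\dots,m\}$. Then $g = h t_i = (h)\,(t_i t_1^{-1})\, t_1$, and since $t_i t_1^{-1} \in S$ and $g \in \Stab{G}{x}$, one deduces $h t_1 \in S$; finally $h t_1 \in \Stab{G}{x} \cap H = \Stab{H}{x} \subseteq S$, giving $g \in S$.

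The step I expect to be the main obstacle is the bookkeeping in property (3): the transporter elements $\TR_H(t_i x)$ are only well-defined up to $\Stab{H}{t_i x}$, so the naive generators $t_i t_1^{-1}$ need not literally stabilize $x$ — one must either absorb the ambiguity into the $\Stab{H}{x}$-part of the generating set or choose the $\TR_H$ consistently. Making this precise (and checking that the stated generating set, exactly as written, does generate the full stabilizer once $\Stab{H}{x}$ is thrown in) is where the argument requires care; everything else is a direct unwinding of the definitions together with the coset-permutation observation above.
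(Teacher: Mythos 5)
Your arguments for properties (1) and (2) are correct and complete: the observation that right multiplication by $g$ permutes the right cosets of $H$, so that $t\mapsto t'$ (where $tg=h_tt'$) is a bijection of $T$, combined with the $H$-invariance of $\CF_H$, is exactly what is needed, and (2) is a one-line unwinding. Note that the paper itself gives no proof to compare against -- Section 2 is explicitly a summary with proofs omitted -- so the proposal has to stand on its own.

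It does not, however, for part (3), and the problem is not the ``bookkeeping'' you flag at the end but a wrong-sided conjugation that no choice of $\TR_H$ repairs. From $\CF_H(t_ix)=\CF_H(t_1x)$ one only gets $h_it_ix=t_1x$ for some $h_i\in H$ (e.g.\ $h_i=\TR_H(t_1x)^{-1}\TR_H(t_ix)$), and the element of $\Stab{G}{x}$ this produces is $t_1^{-1}h_it_i$, not $t_it_1^{-1}$ up to a factor in $\Stab{H}{x}$; one finds instead $t_it_1^{-1}\in H\cdot\Stab{G}{t_1x}$, a stabilizer of the wrong point. Indeed the statement as printed is false: take $G=S_3$ acting on $\{1,2,3\}$, $H=\langle(12)\rangle$, $T=\{e,(13),(23)\}$, $x=3$. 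Then $\CF_H(1)=\CF_H(2)=1$, $\CF_H(3)=3$, so $\CF_G(3)=1$ with $t_1=(13)$, $t_2=(23)$; but $t_2t_1^{-1}=(123)$ does not fix $3$, and together with $\Stab{H}{3}=H$ it generates all of $S_3$, whereas $\Stab{G}{3}=\langle(12)\rangle$. Your reverse inclusion has a second gap at ``$\CF_H(tx)=\CF_H(x)=\cdots$, hence $t=t_i$ for some $i$'': this needs $\CF_H(x)=\CF_G(x)$, which fails whenever $Hx$ is not the minimal $H$-orbit (in the example, $g=(12)\in\Stab{G}{3}$ has $T$-component $t=e$ with $\CF_H(e\cdot 3)=3\neq 1$). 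A correct version of (3) is $\Stab{G}{x}=\langle\, t_1^{-1}\Stab{H}{t_1x}t_1,\ \{t_1^{-1}\TR_H(t_1x)^{-1}\TR_H(t_ix)t_i\mid i=2,\dots,m\}\,\rangle$, proved by noting that $\{t_1,\dots,t_m\}$ are exactly the elements of $T$ lying in the double coset $Ht_1\Stab{G}{x}$, so the listed elements represent the $m$ right cosets of $\Stab{G}{x}\cap t_1^{-1}Ht_1=t_1^{-1}\Stab{H}{t_1x}t_1$ in $\Stab{G}{x}$. Writing this down (and, if you wish, pointing out the misprint in the proposition) is the missing content of your proof.
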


\begin{exmp}\label{ex:subgroup_canonization}
Like in the example above, let $G=(V, E)$ be a graph with $n$ vertices and let $H := \Stab{S_n}{1}$ be the stabilizer of $1 \in V$. Then, we may define the canonization of $G$ under the action of $S_n$ by comparing the canonical forms under the action of $H$ for the $n$  graphs derived by interchanging the vertices $1$ and $i$, $i=1,\ldots, n$.

For example, we may apply this approach if the number of neighbors is constant on $G$. Then, the separation of $1\in V$ allows us to color the vertex $1$ differently from all others and to count neighbors by colors again. This may result in different values and would allow us to define the canonization under $H$ with the help of the Homomorphism Principle.
\end{exmp}

\subsection{Partitions and Refinements}\label{subsec:part_refinement}
As we have seen in Example \ref{ex:subgroup_canonization} it makes sense to combine the methods of Subsections \ref{subsec:hom_principle} and \ref{subsec:lifting_approach}. The basic idea is to alternate between both methods and is known as the \emph{partition refinement method}: $\Can_G(x)$ is recursively computed via
\begin{enumerate}
\item the application of the Homomorphism Principle for a well-defined sequence of homomorphisms of group actions which may lead to a smaller stabilizer $G'$ and the element $x'=gx$.
\item If the group $G'$ is not trivial, we apply the lifting approach for a well-defined subgroup $H \leq G'$ and recursively continue
the computation of $\Can_H(tx')$ for $t\in T$ in a similar way. Otherwise, we just return $(x', \id_{G'}, \{\id_{G'}\})$.
\end{enumerate}

In this formulation, the different canonization processes $\Can_H(tx')$ for the right transversal elements $t \in T$ in the lifting approach are carried out independently. Of course, making use of some global information in this processes could further reduce the computational complexity. The partition refinement method also considers this problem as we will see later. For this reason, we will replace the above formulation by a backtracking approach.

Partition refinement methods are widely used in the canonization of combinatorial objects, for equivalence tests and automorphism group computations, for instance \cite{Feu09, FeuZ4, Leon:Code, Leon:PartitionRefinement, McKay:graphIsomorphism}. In most cases the authors restrict themselves to the action of the symmetric group or some special subgroups. The formulation above shows that the ideas presented there are also applicable for arbitrary groups.

Nevertheless, we will similarly formulate our algorithm only for the action of the symmetric group. The main reason for this restriction
is an easier description of the algorithm and some observations which we can only give in this special case. We will later see, that there is an action of the symmetric group in our problem, too.

A \emph{partition} of $\Set{n} := \{1,\ldots,n\}$ is a set $\mathfrak{p} = \{P_1,\ldots, P_l\}$ of disjoint nonempty subsets of $\Set{n}$ whose union is equal to $\Set{n}$. We call the subsets $P \in \mathfrak{p}$ \emph{cells} of the partition. Cells of cardinality $1$ are \emph{singletons} and the partition $\mathfrak{p}$ is \emph{discrete} if all its cells are singletons.
If all cells of a partition $\mathfrak{p}$ are intervals we call $\mathfrak{p}$ a standard partition. In the following we will always use upper-case letters for standard partitions.
The stabilizer

\begin{equation}\nonumber
S_{\mathfrak{p}} := \Stab{S_n}{\mathfrak{p}}:= \bigcap_{P \in \mathfrak{p}} \Stab{S_n}{P}
\end{equation}
of the (standard) partition $\mathfrak{p}$ is a \emph{(standard) Young subgroup} of $S_n$.
With $\Fixed{\mathfrak{p}} := \{i \in \Set{n} \mid \{i\} \in \mathfrak{p}\}$ we refer to those indices which define singletons of $\mathfrak{p}$, i.e. fixed points under the group action of $S_{\mathfrak{p}}$.

The partition $\mathfrak{p}$ is \emph{finer} than the partition $\mathfrak{p}'$ if each cell $P \in \mathfrak{p}$ is a subset of some cell of $\mathfrak{p}'$. We also call $\mathfrak{p}$ a \emph{refinement} of $\mathfrak{p}'$ and say that $\mathfrak{p}'$ is \emph{coarser} than $\mathfrak{p}$.

Differently to \cite{Leon:Code, Leon:PartitionRefinement, McKay:graphIsomorphism} our approach only uses standard partitions where the ordering of the cells is naturally defined by the elements they contain. This difference is due to the fact that we maintain a coset $S_{\mathfrak{P}}\pi$ of a standard Young subgroup, which is the key data structure in all algorithms, by the pair $(\mathfrak{P},\pi)$ instead.

\subsubsection{Backtrack tree}

Suppose there is the group action of a standard Young subgroup $S_{\mathfrak{P}_0}$ on a set $X$. For an element $x \in X$ we can compute its unique canonical form and its stabilizer using a backtrack procedure on the following search tree, see also Figure \ref{fig:backtracking}:
\begin{itemize}
\item The root node of the search tree is $(\mathfrak{P}_0,\id_{S_n})$ and we will apply a refinement on it as described below.
\item The nodes $(\mathfrak{P}, \pi)$ where $\mathfrak{P}$ is discrete define leaves of the tree.
\item Otherwise, i.e. in the case that $\mathfrak{P}$ is not discrete, we perform an \emph{individualization-refinement} step:
\begin{itemize}
 \item Choose a well-defined\footnote{The target cell selection is an $S_\mathfrak{P}$-invariant.} cell $P \in \mathfrak{P}$ which is not a singleton, called the \emph{target cell} and use the lifting approach for $S_{\mathfrak{P}'} := \Stab{S_\mathfrak{P}}{m} \leq S_\mathfrak{P}$ where $m = \min(P)$:
The refinement $\mathfrak{P}'$ of $\mathfrak{P}$ is derived by separating the minimal element $m\in P$, i.e. replace $P$ by $\{m\}$ and $P  \setminus \{m\}$. If $T$ is a right transversal of $S_{\mathfrak{P}'}$ in $S_{\mathfrak{P}}$, the $\vert T \vert = \vert P \vert$ different children of the actual node are constructed by applying the permutations $t \in T$.
\item A refinement of $\mathfrak{P}'$ for the node $(\mathfrak{P}', t\pi)$ could be computed via the application of the Homomorphism Principle using a fixed $S_{\mathfrak{P}'}$-homomorphism $f_{\mathfrak{P}'}: X \rightarrow Y$. We choose the action and the canonical forms in $Y$ in such a way that their stabilizers are again standard Young subgroups.

Let $\sigma := \TR_{S_{\mathfrak{P}'}}(f_{\mathfrak{P}'}(t \pi x))$ and $S_{\mathfrak{R}} := \Stab{S_{\mathfrak{P}'}}{f_{\mathfrak{P}'}(\sigma t \pi x)}$ be the result of the canonization of $f_{\mathfrak{P}'}(t \pi x)$.
The principle tells us that we have to canonize the element $\sigma t \pi x$ under the action of the group $S_{\mathfrak{R}}$ which is based again on further individualization-refinement steps.
\end{itemize}
\end{itemize}

\begin{figure}[tb]
\begin{center}
\includegraphics[width=0.7\textwidth]{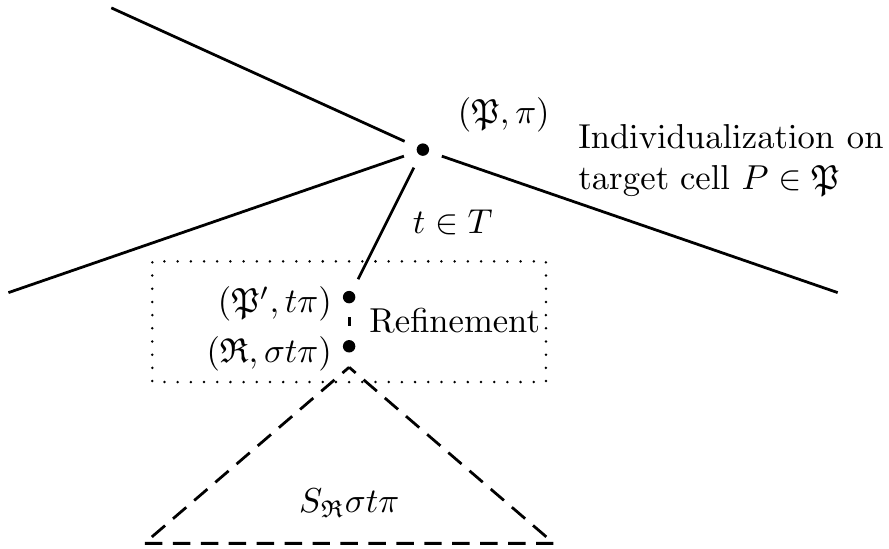}
\end{center}
\caption{Partition refinement backtrack tree} \label{fig:backtracking}
\end{figure}

Traversing this tree in a depth-first search manner corresponds to
the aforementioned alternating application of the Homomorphism
Principle and the lifting approach. We are able to define the
canonical representative $\Can_{S_\mathfrak{P}}(\pi x)$ if we have
visited all children of the node $(\mathfrak{P}, \pi)$. So far, the
canonization under the action of $S_{\mathfrak{P}'}$ are still
independent processes for different $t \in T$. The following
definition of a total ordering $\leq_\mathfrak{P}$ on $X$ allows us
to change this:

\[ x \leq_\mathfrak{P} y :\Longleftrightarrow f_{\mathfrak{P}'}(x) <_Y  f_{\mathfrak{P}'}(y) \vee \left( f_{\mathfrak{P}'}(x) =  f_{\mathfrak{P}'}(y) \wedge x \leq_X y \right)
\] where $\leq_X$ and $\leq_Y$ are still some arbitrary total orderings on $X$ and $Y$ respectively.

This ordering will be used in the lifting approach for the definition of the minimum and it allows us to prune the search tree, i.e. skip the canonization $\Can_{S_{\mathfrak{R}}}(t_2\pi x)$ in the following situation:
If the canonical form $\sigma_1 f_{\mathfrak{P}'}(t_1 \pi x)$ is smaller than the canonical form $\sigma_2 f_{\mathfrak{P}'}(t_2\pi x)$ in the Homomorphism Principle for two nodes arising in an individualization step, we prune the subtree rooted in $(\mathfrak{P}', t_2\pi)$.

\begin{rem}
For the sake of simplicity, we did not use homomorphisms of group
actions in the formulation of the refinement step and we restricted
the formulations to $S_{\mathfrak{P'}}$-homomorphisms
$f_{\mathfrak{P}'}$. The function $f_{\mathfrak{P}'}$ itself might
be a concatenation of several functions which allow a successive
application of the Homomorphism Principle. In this case, we adapt
the ordering $\leq_\mathfrak{P}$ such that we may prune the tree in
some intermediate step as well.
\end{rem}

In the case that $f_{\mathfrak{P}'}(x) =  f_{\mathfrak{P}'}(y)$, we may recursively use $x\leq_X y :\Longleftrightarrow x \leq_\mathfrak{R} y$, where $S_\mathfrak{R} := \Stab{S_{\mathfrak{P}'}}{x}$, to compare $x$ and $y$. Only in the case that $\mathfrak{R}$ is discrete, i.e. the corresponding node is a leaf, we use some fixed ordering on $X$. This also shows that we are not only allowed to compare the children of a fixed node among each other. In fact, we can prune a node $(\mathfrak{P}', \pi)$
of the search tree if there is a another node $(\mathfrak{P}', \sigma)$ on the same level having the same values for all $S_{\mathfrak{Q}}$-homomorphism $f_{\mathfrak{Q}}$ applied from the root down to these nodes and whose actual image $f_{\mathfrak{P}'}(\pi x)$ is larger than $f_{\mathfrak{P}'}(\sigma x)$.

\begin{thm}
Let $\mathfrak{D}$ denote the discrete partition of $\Set{n}$.
Suppose that $(\mathfrak{D}, \pi)$ is the last visited leaf of this
pruned search tree. The mapping

\begin{equation*}
\Can_G(x) := \left(\pi x, \pi, \left\{\pi^{-1} \pi_i \mid i \in
\Set{a}\right\}\right)
\end{equation*} defines a canonization, where
$\pi_1, \ldots, \pi_a$ are those permutations leading to all other
leaf nodes $(\mathfrak{D}, \pi_i)$ with $\pi_i x$ equal to $\pi x$.

\end{thm}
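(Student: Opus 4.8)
The plan is to verify the three defining conditions of a canonization directly, using the recursive structure of the backtrack tree together with the results already available, namely the Homomorphism Principle (Theorem~\ref{thm:Hom_Principle}) and the lifting proposition. The argument proceeds by induction on the depth of the search tree, or equivalently on the index $[S_{\mathfrak{P}_0} : \id]$, since each individualization-refinement step passes to a proper subgroup. At the root we work with $G = S_{\mathfrak{P}_0}$; at an arbitrary node $(\mathfrak{P},\pi)$ we must show that visiting all of its children produces a well-defined $\CF_{S_{\mathfrak{P}}}(\pi x)$, a transporter, and the stabilizer $\Stab{S_{\mathfrak{P}}}{\pi x}$, so that the leaf bookkeeping at the root assembles the claimed triple.

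First I would treat the leaves as the base case: when $\mathfrak{P} = \mathfrak{D}$ is discrete, $S_{\mathfrak{D}} = \{\id\}$, so the trivial canonization $(\pi x, \id, \{\id\})$ is correct, and the candidate canonical form of the leaf is simply $\pi x$ compared under the fixed ordering $\leq_X$. Next, for the induction step at a non-discrete node, I would combine the two earlier tools exactly as the tree construction dictates. The target cell selection is an $S_{\mathfrak{P}}$-invariant, so separating $m = \min(P)$ yields the subgroup $S_{\mathfrak{P}'} = \Stab{S_{\mathfrak{P}}}{m}$ with a right transversal $T$ of size $\vert P\vert$; the lifting proposition then says $\CF_{S_{\mathfrak{P}}}(\pi x) = \min_{t\in T}\CF_{S_{\mathfrak{P}'}}(t\pi x)$, with the transporter read off from a minimizing $t_1$ and the stabilizer generated by $\Stab{S_{\mathfrak{P}'}}{\pi x}$ together with the coset products $t_i t_1^{-1}$ over all minimizing $t_i$. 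Applying the Homomorphism Principle to the $S_{\mathfrak{P}'}$-homomorphism $f_{\mathfrak{P}'}$ reduces $\CF_{S_{\mathfrak{P}'}}(t\pi x)$ to $\CF_{S_{\mathfrak{R}}}(\sigma t\pi x)$ for the refined Young subgroup $S_{\mathfrak{R}}$, which the induction hypothesis handles. The crucial compatibility point is that the recursively defined ordering $\leq_{\mathfrak{P}}$ — first compare the images $f_{\mathfrak{P}'}(\cdot)$ under $<_Y$, then break ties by $\leq_{\mathfrak{R}}$, finally by the fixed ordering at leaves — is a genuine total order on $X$ and is the one with respect to which the lifting-approach minima are taken; this makes the $\min$ in the proposition match the leaf that the depth-first traversal records last among the equal-value leaves.

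The heart of the proof, and the step I expect to be the main obstacle, is the correctness of the \emph{pruning}: I must show that the leaves discarded when a sibling (or a cousin agreeing on all $f_{\mathfrak{Q}}$-values from the root) has a strictly smaller image can never contribute either to the true minimum or to a stabilizer generator. For the canonical form this is immediate from the definition of $\leq_{\mathfrak{P}}$, since a node with a strictly larger $f_{\mathfrak{P}'}$-image cannot lie on any path to a minimal leaf. For the stabilizer the argument is subtler: one must check that whenever $g \in \Stab{S_{\mathfrak{P}}}{\pi x}$, the leaf reached by following $g$ down the tree has exactly the same sequence of $f_{\mathfrak{Q}}$-images as the retained minimal leaf, because $g$ fixes $\pi x$ and the $f_{\mathfrak{Q}}$ are equivariant; hence $g$'s leaf is never pruned, and every stabilizer element is accounted for by some $\pi^{-1}\pi_i$. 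Conversely each $\pi^{-1}\pi_i$ with $\pi_i x = \pi x$ lies in the stabilizer by construction. Once this equivariance-of-the-prune-test observation is in place, collecting the leaves $(\mathfrak{D},\pi_i)$ with $\pi_i x = \pi x$ yields precisely $\Stab{G}{x} = \{\pi^{-1}\pi_i\}$, the identity $\CF_G(x) = \pi x = \pi x$ gives $\CF_G(x) = \TR_G(x)\,x$, and $G$-invariance of $\CF_G$ follows because replacing $x$ by $gx$ permutes the whole tree equivariantly and so leaves the set of leaf-images, hence the minimal one, unchanged.
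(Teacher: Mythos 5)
Your argument is correct and follows essentially the same route as the paper: both rest on the total order determined by the sequence of refinement images (the paper's $\leq_F$), the observation that the pruning only discards leaves whose image sequence is strictly larger than that of the retained minimal leaf, and equivariance of the whole tree construction to obtain $G$-invariance of the resulting form. You additionally spell out the stabilizer claim --- that every $g\in\Stab{G}{x}$ yields an unpruned leaf $\pi g$ because its image sequence coincides with that of $\pi$ --- which the paper's proof leaves implicit.
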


\begin{proof} Let $F := (f_{\mathfrak{P}_0}, \ldots, f_{\mathfrak{P}_r})$
be the sequence of $S_{\mathfrak{P}_i}$-homomorphism, $i=0,\ldots,
r$ applied in the generation process to the leaf node
$(\mathfrak{D}, \pi)$. The sequence $F$ defines a total ordering on
$X$:

\[ x \leq_{F} y :\Longleftrightarrow
(f_{\mathfrak{P}_0}(x), \ldots, f_{\mathfrak{P}_r}(x), x) \leq
(f_{\mathfrak{P}_0}(y), \ldots, f_{\mathfrak{P}_r}(y), y)
\]

The pruning ensures that all other leaves $(\mathfrak{D}, \sigma)$
of this search tree will lead to orbit elements $\sigma x$ with $\pi
x \leq_F \sigma x$. This shows that all leaves which were not
visited correspond to orbit elements that compare strictly larger
than $\pi x$. Therefore, $\pi x$ is the minimal orbit representative
of $S_{\mathfrak{P}_0} x$ under $\leq_{F}$. It is not difficult to
prove that starting this backtracking algorithm for some other
element $x' \in S_{\mathfrak{P}_0} x$ will lead to the same orbit
representative $\pi' x' = \pi x$.
\end{proof}

Obviously, the depth-first-search strategy only allows the pruning
of a subtree based on some partial information. In particular, we
have to explore  subtrees which will later be discarded. A
breadth-first-search strategy would avoid this behavior.
Nevertheless, there are more advantages of a depth-first search
approach: First of all, there are no storage limitations since we
only have to maintain the path from the root node to the actual
node. Furthermore, it is possible to discover automorphisms of the
object $x$,  since those can only be computed by the comparison of
leaf nodes. The group of known automorphisms of $x$ allows us to
perform a further pruning of the search tree. For this goal, we use
the methods described in \cite[Section 5.2]{Feu09}: We store the
subgroup of already known automorphisms $A \leq S_{n}$ by a complete
labeled branching. \cite[Lemma 5.9]{Feu09} now gives a simple
criterion if the coset $S_{\mathfrak{P}}\pi$, i.e. the subtree below
the node $(\mathfrak{P}, \pi)$, has to be traversed or not.

Finally, we would like to mention that \cite{Piperno} discusses a mixture of both strategies in the computation of a canonical form of a graph. We think that this approach might be applicable in our case as well, but we are not yet sure about all the consequences because we have to incorporate a second group action on $x$ at the same time, as we will see in the following section.

\section{A reformulation of the problem}\label{sec:reformulation}

The \emph{projective space} $\PSqk$ is the set of all subspaces of $\F_{q}^{k}$. As usual, we call the one-dimensional subspaces \emph{points} and the $(k-1)$-dimensional subspaces \emph{hyperplanes}.
Let $\Aut(\F_q)$ denote the automorphism group of $\F_q$. Recall that any automorphism $\alpha \in \Aut(\F_q)$ is a power $\tau^a$ of the \emph{Frobenius automorphism}  $\tau: x \mapsto x^p$. It applies to vectors and matrices element-wise.
The set of all semilinear mappings, i.e. the \emph{general semilinear group} $\GammaL{k}(q) := \GL{k}(q)\rtimes \Aut(\F_q)$, decomposes as a semidirect product with multiplication $(A,\tau^a) (B, \tau^b) := (A \: \tau^a (B), \tau^{a+b})$.

\begin{rem} Let $N, H$ be arbitrary groups and $\varphi: H \mapsto \Aut(N)$ be a homomorphism between the group $H$ and the automorphism group of $N$. Although the multiplication of elements $(n_1, h_1), (n_2, h_2) \in N\rtimes_\varphi H$ depends on the choice of $\varphi$, i.e.
$ (n_1, h_1)(n_2, h_2) := (n_1\varphi(h_1)(n_2), h_1h_2)$, we will
not give the exact definition of the homomorphism $\varphi$ when
introducing semidirect products of groups in the following. We
believe that the right choice of $\varphi$ can always be observed
from the context.
\end{rem}

There is a natural action of $\GammaL{k}(q)$ on the projective space $\PSqk$ from the left, i.e.

\begin{align*}
\GammaL{k}(q) \times \PSqk \rightarrow \PSqk \\
\left( (A,\tau^a), \U\right) \mapsto A \tau^a(\U).
\end{align*}
Since this action is not faithful, one may also factor out the kernel resulting in the action of $\PGammaL{k}(q) := \GammaL{k}(q) / \mathbb{F}_q^\ast$ on $\PSqk$, where $\F_q^{\ast}$ denotes the multiplicative group of $\F_q$. We use both groups and both actions interchangeably.

The goal of this paper is the description of a practical canonization algorithm for a given sequence of (multi-) sets $\C = (\C_1, \ldots, \C_m)$, with $C_i \subseteq \PSqk$ under the action of $\GammaL{k}(q)$. The stabilizer subgroup

\[
\Aut(\C) := \Stab{\GammaL{k}(q)}{\C} := \bigcap_{i=1}^{m}  \{  (A,\tau^a) \in \GammaL{k}(q) \mid \forall\ \U \in \C_i : (A,\tau^a)\U \in \C_i\}
\]
is computed by the algorithm at the same time without any additional effort. We apply this algorithm to solve canonization problems in coding theory, see Section \ref{sec:CodingTheory}.

The remaining part of this section deals with further modifications of the given sequence $\C$ we could make:
\begin{itemize}
\item We may assume that the multisets $\C_i$ are in fact disjoint subsets. Otherwise, we could distinguish the occurring subspaces by their sequence of multiplicities, which leads to a sequence of disjoint subsets $\C'$. This defines an $\GammaL{k}$-homomorphism and the stabilizer of $\C'$ acts trivially on $\C$.
\item The action of $\GammaL{k}(q)$ preserves the dimension of any $\U \in \PSqk$. Hence, asking for a canonization algorithm for a set $\mathcal{C}_i$ is equivalent to ask for a canonization of the sequence $(\{\U \in \mathcal{C}_i \mid \dim(\U) = s \})_{s=0,\ldots, k}$. Therefore, we can assume that all elements of a subset $\C_i$ have fixed dimension $0\leq s_i \leq k$.
\item If some subset $\C_i$ is empty or equal to $\{ \U \in \PSqk \mid \dim(U) = s\}$, i.e. the subset of all $s$-dimensional subspaces of $\F_q^k$, for some $s=0,\ldots,k$, we can remove $\C_i$ from the sequence since the action of $\GammaL{k}(q)$ on this subset is trivial. Therefore we could suppose that $1\leq s_i \leq k-1$.
\item The union $\bigcup_{i=1}^{m} \C_i$ spans the whole space, otherwise we would be able to solve the problem in a smaller ambient space $\F_q^{k'}, k'<k$.
\end{itemize}
In the following, we suppose that the sequence $\C = (\C_1, \ldots, \C_m)$ and therefore also the parameters $q, k, m, n_i, s_i, n=\sum_{i=1}^m n_i$ will be fixed.

\begin{prop}\label{prop:actionDual}
For a given subspace $\U \in \PSqk$ let $\U^\perp := \{ v \in \F_q^k  \mid v^T u=0\}$ be its dual subspace.
The dual subspace of $(A, \tau^a)\U$ for $(A, \tau^a) \in \GammaL{k}(q)$ is equal to $({A^{-1}}^T, \alpha)\U^\perp$.
\end{prop}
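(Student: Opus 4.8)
The plan is to verify the identity by splitting the semilinear action into its matrix part and its field-automorphism part. By definition of the $\GammaL{k}(q)$-action on $\PSqk$ we have $(A,\tau^a)\U = A\,\tau^a(\U)$, so what has to be shown is $\bigl(A\,\tau^a(\U)\bigr)^\perp = {A^{-1}}^T\,\tau^a(\U^\perp)$; in particular the automorphism $\alpha$ appearing in the statement is $\tau^a$ itself. I would first establish, for an arbitrary subspace $\mathcal{W}\in\PSqk$, the two auxiliary facts: (i) $(B\mathcal{W})^\perp = {B^{-1}}^T\mathcal{W}^\perp$ for every $B\in\GL{k}(q)$, and (ii) $\bigl(\tau^a(\mathcal{W})\bigr)^\perp = \tau^a(\mathcal{W}^\perp)$ for every $\tau^a\in\Aut(\F_q)$. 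Chaining these with $\mathcal{W}=\tau^a(\U)$ then gives $\bigl(A\,\tau^a(\U)\bigr)^\perp = {A^{-1}}^T\bigl(\tau^a(\U)\bigr)^\perp = {A^{-1}}^T\,\tau^a(\U^\perp)$, which, read through the action, is exactly $\bigl({A^{-1}}^T,\tau^a\bigr)\U^\perp$.

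For (i) I would use the adjunction between $B$ and $B^T$ with respect to the standard bilinear form $\langle v,u\rangle = v^Tu$: a vector $v$ lies in $(B\mathcal{W})^\perp$ iff $v^T(Bw)=0$ for all $w\in\mathcal{W}$, iff $(B^Tv)^Tw=0$ for all $w\in\mathcal{W}$, iff $B^Tv\in\mathcal{W}^\perp$, iff $v\in(B^T)^{-1}\mathcal{W}^\perp = {B^{-1}}^T\mathcal{W}^\perp$. For (ii) I would use that $\tau^a$ is a field automorphism, hence additive and multiplicative, so that it commutes coordinatewise with the bilinear form: $\tau^a(v^Tu)=\tau^a(v)^T\tau^a(u)$. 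Thus $v\in\bigl(\tau^a(\mathcal{W})\bigr)^\perp$ iff $v^T\tau^a(w)=0$ for all $w\in\mathcal{W}$; applying the automorphism $\tau^{-a}$ to this scalar equation, this is equivalent to $\tau^{-a}(v)^Tw=0$ for all $w\in\mathcal{W}$, i.e.\ $\tau^{-a}(v)\in\mathcal{W}^\perp$, i.e.\ $v\in\tau^a(\mathcal{W}^\perp)$.

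The computation is otherwise routine linear-algebra bookkeeping; the only point that needs care is step (ii), since $v^T\tau^a(w)$ must not be treated as an ordinary bilinear pairing — it is only additive in each argument and $\F_q$-semilinear — so the essential move is the passage through the inverse automorphism $\tau^{-a}$. This step is also precisely what fixes the field-automorphism component on the right-hand side to be $\tau^a$ (so $\alpha=\tau^a$) rather than some other power of the Frobenius. As a consistency check one may note $\dim\U^\perp = k-\dim\U$, so both sides of the claimed equality have the same dimension, although the equivalences above already yield the equality outright.
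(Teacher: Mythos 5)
Your proof is correct. It reaches the same conclusion as the paper but by a slightly different route: the paper takes arbitrary $u\in\U$, $v\in\U^\perp$ and computes in one line $\bigl(({A^{-1}}^T,\tau^a)v\bigr)^T(A,\tau^a)u=\tau^a(v)^TA^{-1}A\tau^a(u)=\tau^a(v^Tu)=0$, which only yields the inclusion $({A^{-1}}^T,\tau^a)\U^\perp\subseteq\bigl((A,\tau^a)\U\bigr)^\perp$, and then closes the argument with the dimension count $\dim\U^\perp=k-\dim\U$ that you relegate to a ``consistency check.'' You instead factor the semilinear map into its linear and Frobenius parts, prove each of the two commutation identities as a chain of equivalences (adjunction $v^T(Bw)=(B^Tv)^Tw$ for the linear part; passage through $\tau^{-a}$ for the semilinear part), and compose them, so you obtain the set equality outright with no dimension argument. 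The trade-off is minor: the paper's computation is shorter, while your version makes explicit why no dimension count is needed and isolates the one delicate point, namely that $v^T\tau^a(w)$ is not bilinear and must be handled by applying the inverse automorphism. You also correctly resolve the notational slip in the statement, where the undefined $\alpha$ should be read as $\tau^a$.
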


\begin{proof}
Let $\U \in \PSqk$, $(A, \tau^a) \in \GammaL{k}(q)$ and $u \in U, v \in \U^\perp$ be arbitrary. The equation

\begin{align*}
(({A^{-1}}^T, \tau^a)v)^T (A,\tau^a)u = \tau^a(v)^T A^{-1} A\tau^a(u)
= \tau^a(v^T u) = \tau^a(0) = 0
\end{align*}
shows that $({A^{-1}}^T, \tau^a)\U^\perp \subseteq \left((A, \tau^a)\U \right)^\perp$. But both subspaces have dimension $n-\dim(\U)$ and hence must be equal.
\end{proof}

\begin{rem}
If we define $\C_i^\perp := \{\U^\perp \mid \U \in \C_i\}$ and
$\C^\perp := (\C_1^\perp, \ldots, \C_m^\perp)$ then we may also
canonize $\C^\perp$, i.e. compute $\Can_{\GammaL{k}(q)}(\C^\perp)$,
and define the canonical form $\overline{\CF}_{\GammaL{k}(q)}(\C) :=
\left(\CF_{\GammaL{k}(q)}(\C^\perp)\right)^\perp$. The automorphism
group of $\C$ is equal to

\[\left\{ ({A^{-1}}^T, \tau^a) \mid (A, \tau^a)\in  \Aut(\C^\perp) \right\}.\]
This transformation will always be applied if we suppose that the computational effort of computing $\Can_{\GammaL{k}(q)}(\C^\perp)$ is less expensive than the computation of $\Can_{\GammaL{k}(q)}(\C)$.
\end{rem}

Let $\F_q^{k\times n, s}$ denote the set of $k\times n$-matrices of
rank $s$. The algorithm we are going to develop is a generalization
of the canonization algorithm for linear codes, see \cite{Feu09,
FeuZ4} and Section \ref{subsec:linCodes} for a short summary.
Instead of working on linear codes directly, the problem is
transferred to generator matrices of linear codes, i.e. matrices
whose rows form an $\F_q$-basis of the linear code. Two matrices
$\Gamma, \Gamma' \in \F_q^{k\times n,k}$  generate equivalent codes,
if their orbits under the group action of $ ( \GL{k}(q) \times
{\F_q^{\ast}}^n ) \rtimes (S_n \times \Aut(\F_q))$ are the same. It
is a well-known fact \cite[9.1.2]{Kerber} that there is a one-to-one
correspondence of these orbits and the orbits of $\GammaL{k}(q)$ on
multisets of at most $n$ points in the projective space, which span
a vector space of dimension $k$. Therefore, the canonization
algorithm for linear codes already solves the canonization problem
for any multiset of points. A closer look reveals that the algorithm
similarly transfers the multiset to a sequence of disjoint sets of
points.

Now, represent the element $\U \in \C_i$ by some matrix $U \in
\F_q^{k \times s_i, s_i}$ whose columns generate $\U$. The set of
all matrices generating $\U$ in this regard is equal to the orbit
$\GL{s_i}(q)U := \{ UA^T \mid A \in \GL{s_i}(q)\}$. Analogously, we
can identify the set $\C_i := \{\U_1,\ldots, \U_{n_i}\}$ with the
orbit of $S_{n_i}$ on $(\U_1,\ldots, \U_{n_i})$. In summary, there
is a natural one-to-one correspondence between the set $\C_i$ and
the orbit of $(U_1, \ldots, U_{n_i})$ under the action of
$\GL{s_i}(q)^{n_i} \rtimes S_{n_i}$. This semidirect product is
equal to the wreath product $\GL{s_i}(q) \wr S_{n_i}$.

In the case that $s_i=1$ we know that $\GL{1}(q) = \F_q^\ast$ and the group $\GL{1}(q)^{n_i} \rtimes S_{n_i} = {\F_q^\ast} \wr S_{n_i}$ is isomorphic to the group of $\F_q^\ast$-monomial matrices, i.e. the set of permutation matrices whose nonzero entries got replaced by elements from $\F_q^{\ast}$. In this regard, we can view the wreath product $\GL{s_i}(q) \wr S_{n_i}$ as the group of $\GL{s_i}(q)$-monomial matrices by replacing the nonzero entries of a permutation matrix by arbitrary elements from $\GL{s_i}(q)$ and the zero entries by $(s_i\times s_i)$-zero matrices.

Altogether, we can identify the sequence $\C$ with the orbit of

\begin{equation*}
{\left(U_1^{(i)}, \ldots, U_{n_i}^{(i)}\right)}_{i \in \Set{m}} \in \prod_{i=1}^{m}\left(\F_q^{k \times s_i, s_i}\right)^{n_i}
\end{equation*}
under the action of $\prod_{i=1}^{m} (\GL{s_i}(q)^{n_i} \rtimes S_{n_i})$ which could be interpreted as the group of block diagonal matrices whose $m$ nonzero blocks are equal to $\GL{s_i}(q)$-monomial matrices.

Finally, taking the action of $\GammaL{k}(q)$ into account we have
to canonize the sequence $\left( ( U_1^{(i)}, \ldots, U_{n_i}^{(i)}
)\right)_{i\in \Set{n}}$ under the action of

\begin{equation*}
\left( \GL{k}(q) \times  \prod_{i=1}^{m} (\GL{s_i}(q)^{n_i} \rtimes S_{n_i}) \right) \rtimes \Aut(\F_q),
\end{equation*}
where the action is defined as follows:\\
Let $\left(A, {\left(B_{1}^{(i)},\dots,B_{n_i}^{(i)}, \pi^{(i)}\right)}_{i \in \Set{m}}, \tau^a \right)$ be an element of the acting group then

\begin{align*}
&\left(A, {\left(B_{1}^{(i)},\dots,B_{n_i}^{(i)}, \pi^{(i)}\right)}_{i\in \Set{m}}, \tau^a \right)  {\left(U_1^{(i)}, \ldots, U_{n_i}^{(i)}\right)}_{i\in \Set{m}} \\
:=& {\left(A\tau^a \left(U_{{\pi^{(i)}}^{-1}(1)}^{(i)}\right){B^{(i)}_{1}}^T, \ldots, A\tau^a\left(U_{{\pi^{(i)}}^{-1}(n_i)}^{(i)}\right) {B^{(i)}_{n_i}}^T\right)}_{i\in \Set{m}}.
\end{align*}

In order to apply the methods developed in the Subsection \ref{subsec:part_refinement}, we change the order in which we compose the group:

\begin{align*}
&\left(\left( \GL{k}(q) \times  \prod_{i=1}^{m} \GL{s_i}(q)^{n_i} \right) \rtimes \Aut(\F_q) \right) \rtimes \prod_{i=1}^{m} S_{n_i}\\
\simeq &\left(\left( \GL{k}(q) \times  \prod_{i=1}^{m} \GL{s_i}(q)^{n_i} \right) \rtimes \Aut(\F_q) \right) \rtimes S_{\mathfrak{P}_0}
\end{align*}
and replace the permutational part of this group using the standard Young subgroup $S_{\mathfrak{P}_0}$ to the partition $\mathfrak{P}_0 := \{\{1,\ldots, n_1\}, \ldots, \{n-n_m+1,\ldots, n\}\}$ of $\Set{n}$.

A final reformulation of our problem will be given in Section \ref{sec:Algorithm} since we would like to motivate the algorithm with the observations given in \cite{Feu09}. It will show that we could observe a homomorphic group action of the symmetric group $S_{\mathfrak{P}_0}$. This allows us to apply the ideas developed in Subsection \ref{subsec:part_refinement}.

As we have seen above, the comparison of the objects we are working
with plays a central role in the canonization process. In our case,
if nothing else is stated, we will suppose that $\F_q$ is totally
ordered such that $0<1 \leq \mu$ for all $\mu \in \F_q^\ast$. Then
we can totally order the set of $k\times n$-matrices by interpreting
them as lexicographically ordered sequences of colexicographically
ordered column vectors.

Furthermore, we will access submatrices of a matrix $U \in \Mat{k}{n}$ in the following way:
\begin{itemize}
\item $U_{\ast,i}$ denotes the $i$-th column of $U$. Similarly, we write $U_{i, \ast}$ for the $i$-th row.
\item For a sequence $I:=(i_1,\ldots, i_m)$ of indices $i_j \in \Set{n}$ we write $U_{\ast, I} := (U_{\ast, i_1}, \ldots, U_{\ast,i_m})$ for the projection of the matrix onto the columns given by $I$. We also use this notation for the set $I:=\{i_1, \ldots, i_m\}$ which should be interpreted as the lexicographically ordered sequence of its elements.
\item Finally, if $J$ is a sequence of indices in $\Set{k}$, then $U_{J, \ast}$ denotes a similar access to the rows of $U$ and
$U_{J, I} := (U_{J, \ast})_{\ast,I}$.
\end{itemize}

\section{Coding theory}\label{sec:CodingTheory}
Let $(M_1,d_1)$ and $(M_2, d_2)$ be two metric spaces. A map $\iota: M_1
\rightarrow M_2$ is an \emph{isometry} if it respects distances, i.e.
$d_2(\iota(x), \iota(y)) = d_1(x,y)$ for all $x,y \in M_1$.


\subsection{Random network codes}
The \emph{subspace distance} is a metric on the projective space $\PSqk$ given by

\begin{align*}
d_S(\mathcal{U},\mathcal{V}) := \dim(\U + \mathcal{V}) - \dim(\U \cap \mathcal{V})
= \dim(\U) + \dim(\mathcal{V}) - 2\dim(\mathcal{U}\cap
\mathcal{V})
\end{align*}
for any $\mathcal{U},\mathcal{V} \in \PSqk$. It is a suitable distance for coding over the operator channel
using so-called \emph{random network codes} $\C \subset \PSqk$, see \cite{ko08}.

Obviously, the action of an element of the general semilinear group preserves the subspace distance.
On the other hand, \cite{traut11} showed that $\PGammaL{k}(q)$ is isomorphic to the group of isometries on $\PSqk$ which preserve the dimension of each element in $\PSqk$. The dimension is another basic property of a codeword which should be preserved, too. Therefore, it makes sense to define equivalence of random network codes by means of this group action. It shows that the canonization of random network codes is a special case of our algorithm for sequences of length one.

\subsection{Additive codes}
An \emph{$\F_q$-linear block code} over the alphabet $\F_{q^{s}}, s \geq 1$ is an $\F_q$-linear subset of $\F_{q^{s}}^n$ equipped with the usual Hamming distance $d_{\operatorname{Ham}}$. Additive codes with $s=1$ are classical linear codes. For $s>1$, those codes are sometimes also called \emph{additive codes}. They gained more and more interest in the past years since for example self-orthogonal additive codes over $\F_{q^{2}}$ could be used for quantum error-correction, see \cite{Ashikhmin_Knill}.

With an $\F_q$-linear representation of the elements of $\F_{q^{s}}$ in $\F_{q}^{s}$, the $\F_q$-linear code can be represented by a generator matrix with entries in $\F_{q}$. Let $T: \F_{q^{s}} \rightarrow \F_{q}^{s}$ denote the corresponding $\F_q$-linear mapping.
Defining the distance

\[d_{\operatorname{Ham}_s}(x,y) := d_{\operatorname{Ham}}(T^{-1}(x), T^{-1}(y)) = \begin{cases} 0 & x=y \\1, & else \end{cases} \textnormal{ for } x,y \in \F_{q}^{s}\] and extending this definition as usual to $\F_{q}^{sn}$ we are able to find all isometries on $\F_{q^{s}}^n$ mapping
$\F_q$-linear codes onto $\F_q$-linear codes:

\begin{itemize}
\item The multiplication of $\F_{q}^{s}$ by an invertible matrix $A \in \GL{s}(q)$ defines an $\F_q$-linear isometry.
\item The same holds for the permutation of the $n$ components of $\F_{q}^{sn}$.
\item The element-wise application of an automorphism of $\F_q$ defines an isometry on $\F_{q}^{sn}$, which maps $\F_q$-linear codes onto $\F_q$-linear codes.
\end{itemize}
Altogether, this defines a group action of $\GL{s}(q)^n \rtimes (S_n \times \Aut(\F_q))$ on the set of $\F_q$-linear subsets of $\F_{q^{s}}^n$. Since isometries are injective, we could also restrict this action to act on subsets $C$ with $\dim_{\F_q}(C) = k$. Each such subset $C$ could be represented by a generator matrix $\Gamma \in \F_q^{k \times sn, k}$. The set of all generator matrices of $C$ is equal to the orbit $\GL{k}(q)\Gamma$. Hence, we are interested in the canonization of a generator matrix under the group action of

\begin{equation*}
\left(\GL{k}(q) \times \GL{s}(q)^n \right) \rtimes (S_n \times \Aut(\F_q)).
\end{equation*}
Since every $s$ consecutive columns may define the same subspace, the code could be identified with a multiset of subspaces of $\F_q^k$.

\section{The Algorithm}\label{sec:Algorithm}

In this section we develop a practical algorithm which computes the automorphism group and a canonical form of a given sequence
$\C = (\C_1,\ldots, \C_m)$. In Section \ref{sec:Canonization} we have seen why it is useful to combine the searches for both tasks.

\subsection{The algorithm for linear codes revisited}\label{subsec:linCodes}
First of all, we want to motivate our procedure by a reformulation
of the canonization algorithm in \cite{Feu09} for linear codes using
the language developed in Section \ref{sec:Canonization}. Some of
the ideas we are going to introduce are based on \cite{FeuZ4} which
gives a more detailed description of the backtracking approach. The
algorithm that we are going to develop in the subsequent subsections
can be seen as a natural generalization of the one for linear codes.

We first observe that we could compute a canonical form of a $k$-dimensional linear code $C$
with generator matrix $\Gamma \in \Mat{k}{n,k}$ using the ideas presented in Section \ref{subsec:part_refinement}:
For simplicity, let $\Gsl:= \left( \GL{k}(q) \times {\F_q^\ast}^n \right) \rtimes \Aut(\F_q)$. The group $S_{\mathfrak{P}_0}$
acts on the set of orbits $\Gsl \backslash\!\!\backslash \Mat{k}{n,k} := \{ \Gsl \Gamma' \mid \Gamma' \in \Mat{k}{n,k} \}$ and we can define a homomorphism of group actions

\begin{equation*}
\left(
\begin{array}{rrclcrrcl}
\theta:& \Mat{k}{n,k} &\rightarrow& \Gsl \backslash\!\!\backslash \Mat{k}{n,k} & , &
\varphi:& \Gsl \rtimes S_{\mathfrak{P}_0} &\rightarrow& S_{\mathfrak{P}_0} \\
& \Gamma &\mapsto & \Gsl\Gamma &&& (g, \pi) &\mapsto& \pi
\end{array}
\right).
\end{equation*}

Before we provide the details of the canonization $\Can_{S_{\mathfrak{P}_0}}(\Gsl \Gamma )$ we explain how to define
$\Can_{\Gsl \rtimes S_{\mathfrak{P}_0}}$ using the Homomorphism Principle:
First of all, in \cite{Feu09} it is observed that there is a direct and efficient canonization algorithm $\Can_{\Gsl}$ for the action of $\Gsl$ on $\Mat{k}{n}$. Let $\pi = \TR_{S_{\mathfrak{P}_0}}(\Gsl \Gamma )$ and $\Gsl \rtimes H = \Stab{\Gsl \rtimes S_{\mathfrak{P}_0}}{\pi \Gsl \Gamma}$ be the result of the canonization. Since we know that $\left(\Gsl \rtimes H \right) \pi \Gamma = \Gsl\pi\Gamma$ it remains to define
$\CF_{\Gsl \rtimes H}(\pi\Gamma) := \CF_{\Gsl}(\pi\Gamma)$ and the transporter
$\TR_{\Gsl \rtimes H}(\pi\Gamma) := (\TR_{\Gsl}(\pi\Gamma), \id)$. There is also a simple way to compute the automorphism group
\linebreak $\Stab{ \Gsl \rtimes H }{\Gamma}$ using the canonization under the action of $\Gsl$. The details are left to the reader.
We will later see that all necessary data is already computed in the computation of $\Can_{S_{\mathfrak{P}_0}}(\Gsl \Gamma )$.

\subsubsection{Backtrack search}
This shows that we are able to give a practical canonization algorithm for linear codes if we are able to give a practical algorithm for the computation of $\Can_{S_{\mathfrak{P}_0}}(\Gsl \Gamma )$. This algorithm will be based on the partition refinement idea. In this algorithm, it is necessary to compare the leaves of the backtrack search tree. Therefore, we have to define a total ordering on $\Gsl \backslash\!\!\backslash \Mat{k}{n,k}$: Let $\Can_{\Gsl}$ be a canonization algorithm for the action of $\Gsl$ on the set $\Mat{k}{n,k}$, then we may order the orbits via the ordering on their canonical forms.

It remains to give the $S_{\mathfrak{P}}$-homomorphisms which will
be applied to a node $(\mathfrak{P},\pi)$. The first is closely
related to $\Can_{\Gsl}$. Let $\FixedSeq{\mathfrak{P},\pi}$ be the
sequence of elements of $\Fixed{\mathfrak{P}}$ in the order they
appeared as singletons in the refinement process $\mathfrak{P}_0
\geq \ldots \geq \mathfrak{P}$ leading to this node
$(\mathfrak{P},\pi)$. We say that a matrix
$\Gamma^{(\mathfrak{P},\pi)}$ is a \emph{semicanonical
representative} of the node $(\mathfrak{P},\pi)$ if
$\Gamma^{(\mathfrak{P},\pi)} \in \Gsl\pi\Gamma$ and

\begin{equation*}
\left(\Gamma^{(\mathfrak{P},\pi)}\right)_{\ast,
\FixedSeq{\mathfrak{P},\pi)}} \leq \left(\Gamma'\right)_{\ast,
\FixedSeq{(\mathfrak{P},\pi}} \textnormal{ for all } \Gamma' \in
\pi\Gsl\Gamma = \Gsl\pi\Gamma.
\end{equation*}

\begin{prop}
The projection $\pi\Gsl\Gamma \mapsto
\left(\Gamma^{(\mathfrak{P},\pi)}\right)_{\ast,
\FixedSeq{\mathfrak{P},\pi}}$ is $S_{\mathfrak{P}}$-invariant.
\end{prop}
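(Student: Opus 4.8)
The plan is to show that the value $\left(\Gamma^{(\mathfrak{P},\pi)}\right)_{\ast,\FixedSeq{\mathfrak{P},\pi}}$ depends only on the orbit $\pi\Gsl\Gamma = \Gsl\pi\Gamma$, i.e.\ that replacing $\pi$ by $\sigma\pi$ for $\sigma \in S_{\mathfrak{P}}$ does not change it. First I would observe that the map $\FixedSeq{\mathfrak{P},\pi}$ is itself $S_{\mathfrak{P}}$-invariant: the target cell selection at each refinement step was assumed to be an $S_{\mathfrak{P}}$-invariant, so the sequence of cells chosen along the path $\mathfrak{P}_0 \geq \cdots \geq \mathfrak{P}$ is the same for $\pi$ and $\sigma\pi$, and at each step the separated singleton is the \emph{minimum} of the current target cell, which again does not depend on $\pi$. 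Hence the standard partition $\mathfrak{P}$ and the ordered index sequence $\FixedSeq{\mathfrak{P},\pi}$ are genuinely attached to the node's partition data, not to the particular coset representative.

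Next I would unwind the definition of a semicanonical representative. By definition $\Gamma^{(\mathfrak{P},\pi)} \in \Gsl\pi\Gamma$ and, among all $\Gamma' \in \Gsl\pi\Gamma$, it minimizes the projection onto the columns indexed by $I := \FixedSeq{\mathfrak{P},\pi}$ with respect to the fixed total ordering on matrices (the colex-on-columns, lex-on-sequence ordering fixed in Section~\ref{sec:reformulation}). In other words, $\left(\Gamma^{(\mathfrak{P},\pi)}\right)_{\ast,I} = \min\{ \Gamma'_{\ast,I} \mid \Gamma' \in \Gsl\pi\Gamma \}$, a minimum taken in a totally ordered set over the fixed set $\{\Gamma'_{\ast,I} : \Gamma' \in \Gsl\pi\Gamma\}$. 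This minimum exists and is unique, so the quantity is well-defined from $(\Gsl\pi\Gamma, I)$ alone.

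Now the two ingredients combine: if $\sigma \in S_{\mathfrak{P}}$, then $\Gsl(\sigma\pi)\Gamma = \Gsl\pi\Gamma$ (this is the identity $\pi\Gsl\Gamma = \Gsl\pi\Gamma$ already used in the text, applied to the coset rather than to $\Gamma$; more directly, $\sigma$ permutes blocks within cells of $\mathfrak{P}$ and $\Gsl$ already absorbs arbitrary column monomial action within blocks — whence the orbit is unchanged), and $\FixedSeq{\mathfrak{P},\sigma\pi} = \FixedSeq{\mathfrak{P},\pi}$ by the first paragraph. Therefore the minimizing projection for the node $(\mathfrak{P},\sigma\pi)$ is computed over exactly the same set of matrix-projections as for $(\mathfrak{P},\pi)$, giving the same minimum: $\left(\Gamma^{(\mathfrak{P},\sigma\pi)}\right)_{\ast,\FixedSeq{\mathfrak{P},\sigma\pi}} = \left(\Gamma^{(\mathfrak{P},\pi)}\right)_{\ast,\FixedSeq{\mathfrak{P},\pi}}$. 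Since the stabilizer $S_{\mathfrak{P}}$ acts on $\Gsl\backslash\!\!\backslash\Mat{k}{n,k}$ and the map sends $\Gsl\pi\Gamma$ to this projection, invariance under $S_{\mathfrak{P}}$ follows.

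The main obstacle, and the only place that needs care, is verifying that $\FixedSeq{\mathfrak{P},\pi}$ really is $S_{\mathfrak{P}}$-invariant as a \emph{sequence} (not merely as a set): one must check that the individualization choices made on the path to the node — which cell is the target, and which element of it is separated — are phrased so as to depend only on $\mathfrak{P}$-data and the fixed ordering, never on $\pi$. This is exactly the footnote hypothesis on target cell selection together with the ``separate the minimum'' rule, so there is nothing deep here, but it is the hinge of the argument and should be stated explicitly.
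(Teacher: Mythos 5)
The paper states this proposition without proof, so your argument stands on its own; its overall structure (first the $S_{\mathfrak{P}}$-invariance of $\FixedSeq{\mathfrak{P},\pi}$ as a sequence, then the well-definedness of the minimum over the orbit) is the right one, and your first two paragraphs are sound. But the key step of your third paragraph is wrong as stated: for $\sigma \in S_{\mathfrak{P}}$ the orbits $\Gsl\sigma\pi\Gamma$ and $\Gsl\pi\Gamma$ are \emph{not} equal in general. The group $\Gsl = \left(\GL{k}(q)\times{\F_q^{\ast}}^{n}\right)\rtimes\Aut(\F_q)$ contains column scalings but no column permutations --- the permutation part of the monomial group has been deliberately split off into $S_{\mathfrak{P}_0}$ --- so $\Gsl$ does not ``absorb'' the action of $\sigma$ within cells. (If it did, every leaf of the backtrack tree would represent the same $\Gsl$-orbit and the whole search would be vacuous.) What is true, and what the identity $\pi\Gsl\Gamma=\Gsl\pi\Gamma$ in the text actually expresses, is only the conjugation relation $\sigma\Gsl=\Gsl\sigma$, which yields $\Gsl\sigma\pi\Gamma=\sigma\left(\Gsl\pi\Gamma\right)$, a genuinely different set from $\Gsl\pi\Gamma$.

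The missing observation that closes the gap is that $\sigma\in S_{\mathfrak{P}}$ fixes every element of $\Fixed{\mathfrak{P}}$ pointwise (a singleton cell stabilized setwise is fixed pointwise), so $\sigma$ acts as the identity on the columns indexed by $I := \FixedSeq{\mathfrak{P},\pi}$, i.e.\ $(\sigma\Gamma')_{\ast,I}=\Gamma'_{\ast,I}$ for every matrix $\Gamma'$. Combining this with $\Gsl\sigma\pi\Gamma=\sigma\left(\Gsl\pi\Gamma\right)$ gives
\begin{equation*}
\left\{\Gamma''_{\ast,I}\mid\Gamma''\in\Gsl\sigma\pi\Gamma\right\}=\left\{(\sigma\Gamma')_{\ast,I}\mid\Gamma'\in\Gsl\pi\Gamma\right\}=\left\{\Gamma'_{\ast,I}\mid\Gamma'\in\Gsl\pi\Gamma\right\},
\end{equation*}
so the two minimizations range over the same set of projections and produce the same value. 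With this substitution in place of the false ``orbit unchanged'' claim, your proof is correct; the care you take in the first paragraph to establish that $\FixedSeq{\mathfrak{P},\pi}$ is invariant as an ordered sequence, not merely as a set, is indeed the other hinge of the argument.
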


This invariant is applied immediately after each individualization
step and after each refinement which leads to a new singleton in the
partition $\mathfrak{P}$. Since it is an invariant, it will not
refine the partition $\mathfrak{P}$. But, it will give us the
possibility to prune the search tree.

For a child $(\mathfrak{R},\sigma\pi)$ of $(\mathfrak{P},\pi)$ the
semicanonical representative could be easily computed from the
semicanonical representative of $(\mathfrak{P},\pi)$. For this
computation, we only need to know the stabilizer
$\Inn^{(\mathfrak{P}, \pi)}\leq \Gsl$ of
$\left(\Gamma^{(\mathfrak{P},\pi)} \right)_{\ast,
\FixedSeq{\mathfrak{P},\pi}}$, where the action is defined by

\begin{equation*}
(A, b, \tau^a) \Gamma' := (A, {(b_j)}_{j \in
\FixedSeq{\mathfrak{P},\pi}}, \tau^a) \Gamma' \textnormal{ for all }
\Gamma' \in \Mat{k}{\vert \Fixed{\mathfrak{P}} \vert}.
\end{equation*}
For more details see \cite{Feu09}.

Therefore, it makes sense to add this
data to the nodes of the backtrack tree. We modify the nodes $(\mathfrak{P}, \pi)$ of Section \ref{subsec:part_refinement}, such that the algorithm additionally maintains the orbit $\Gsl\pi\Gamma$ by the pair $(\Gamma^{(\mathfrak{P},\pi)}, \Inn^{(\mathfrak{P}, \pi)})$.
We will call the action by $\Inn^{(\mathfrak{P}, \pi)}$ the \emph{inner group action} and $\Inn^{(\mathfrak{P}, \pi)}$ the \emph{inner stabilizer}. The computation of the semicanonical representative of $(\mathfrak{P}, \pi)$ will also be called the \emph{inner minimization process}.

\begin{rem}
The computation of a semicanonical representative itself could be
seen as an application of the Homomorphism Principle applied to the
$\Gsl$-homomorphism $\Gamma \mapsto \Gamma_{\ast,
\FixedSeq{\mathfrak{P},\pi}}$. If $\mathfrak{P}$ is discrete the
semicanonical representative defines a canonical form of the orbit
$\Gsl\pi\Gamma$.
\end{rem}

\begin{rem}\label{rem:backtrack_interpretation}
There is also a second interpretation of the subtree below some node $(\mathfrak{P}, \pi, \Gamma^{(\mathfrak{P}, \pi)}, \Inn^{(\mathfrak{P}, \pi)})$. It could be identified as the canonization of $\Gamma^{(\mathfrak{P}, \pi)}$ under the action of
$\Inn^{(\mathfrak{P}, \pi)} \rtimes S_\mathfrak{P}$. For the root node, the group $\Inn^{(\mathfrak{P}_0, \id)} \rtimes S_{\mathfrak{P}_0}$ is equal to $\Gsl \rtimes S_{\mathfrak{P}_0}$, i.e. the action we are actually interested in.
We have motivated this backtracking with the canonization under $S_{\mathfrak{P}_0}$, since
\begin{itemize}
\item we already proved its correctness in Section \ref{subsec:part_refinement},
\item the test on the group of known automorphisms is restricted to standard Young subgroups, and
\item the homomorphism of group actions we are going to apply should have this special structure, i.e. they will be either $S_{\mathfrak{P}}$-homomorphisms or equal to the inner minimization process. This will avoid the occurrence of complex subgroups of  $\Gsl \rtimes S_{\mathfrak{P}_0}$ for which it would be difficult to define appropriate homomorphisms of group actions in the refinement steps. Furthermore, the computation of the transversal $T$ in an individualization step would become more complicated, too.
\end{itemize}
\end{rem}

Apart from some further $S_{\mathfrak{P}}$-homomorphisms which make
use of $(\Gamma^{(\mathfrak{P}, \pi)}, \Inn^{(\mathfrak{P}, \pi)})$,
there is another very important $S_{\mathfrak{P}}$-homomorphism used
to derive further refinements. In particular, this
$S_{\mathfrak{P}}$-homomorphism works already very well on nodes on
the first levels of the backtrack tree. We are going to generalize
it in the following. Again, we will give a more general description
of this function than given in \cite{Feu09, FeuZ4}. The basic idea
is a modification of Leon's algorithm, \cite{Leon:Code}, for the
computation of the automorphism group of a linear code:

\subsubsection{Leon's invariant set of codewords}
Suppose that $W := \{c^{(1)}, \ldots, c^{((q-1)h)}\} \subseteq C$ is
a set of codewords, which is invariant under the automorphism group
of $C$. In fact, if one wants to apply Leon's algorithm,
\cite{Leon:Code}, to test two linear codes for equivalence the
mapping $C \mapsto W$ has to be an $(\F_q^{\ast})^n \rtimes (S_n
\times \Aut(\F_q))$-homomorphism. For simplicity suppose that $W$ is
formed by all words of minimal nonzero weight $d$. This set is once
computed for the root node and fixed for the whole backtracking
algorithm.

For each codeword $c^{(j)}$ there is a well-defined information word $v^{(j)} \in \F_{q}^{n}$ such that $v^{(j)}\Gamma = c^{(j)}$. Since $W$ is closed under scalar multiplication, we may restrict ourselves to the set $\overline{W}$ of projective representatives and define by
$\H := \{ v^{(j)} \mid j=1, \ldots, h\} := \{ v \in \F_q^n \mid v\Gamma \in \overline{W}\}$ the corresponding set of information words.

The standard inner product of $\langle v^{(j)},\Gamma_{\ast, i} \rangle = {v^{(j)}}^T\Gamma_{\ast, i}$ is equal to $c_i^{(j)}$, the $i$-th coordinate of the vector $c^{(j)}$. Therefore, the set $\H$ is the well defined subset of all normal vectors of those hyperplanes containing exactly $n-d$ points $\Point{\Gamma_{\ast, i}}$.

This gives a bipartite, vertex-colored subgraph of the subspace lattice of the projective space, whose vertices are labeled by $\Set{n}$ and $\{n+1,\ldots, n+h\}$ respectively. Initially, the colors only distinguish vertices by dimension and in the case of points additionally by the cell they are contained in. Since the action by $\GammaL{k}(q)$ obviously preserves the graph structure, this graph is independent from the actual representative of $\Gsl\Gamma$. Furthermore, it is well-defined up to the action of $S_h$ on the vertices $\{n+1,\ldots, n+h\}$. The permutation of the columns of the generator matrix results in a relabeling of the vertex set $\Set{n}$.

Cell-wise counting of neighbors for each vertex allows us to define an $S_{\mathfrak{P}_0}\times S_h$-homomorphism and hence to apply the Homomorphism Principle to refine the partition (coloring) of $\Set{n+h}$. Since the projection on the first $n$ components also obviously defines an $S_{\mathfrak{P}_0}$-homomorphism, this could be also seen as a refinement of the root node of the backtrack search tree. The resulting permutation in the application of the Homomorphism Principle gives a simultaneous relabeling of the graph and a permutation of the columns of the generator matrix.

This homomorphism on the incidence graph is also used in the
computation of a canonical form of an arbitrary graph
\cite{McKay:graphIsomorphism}. We furthermore observe that the finer
partitions $\mathfrak{R}_0$ of $\Set{n}$ and $\mathfrak{Q}_0$ of
$\Set{h}$ allow us to call this invariant iteratively. Hence, the
nodes of the backtrack tree additionally maintain a partition
$\mathfrak{Q}$ of the set $\Set{h}$. Furthermore, instead of storing
the relabeled graph, we just maintain a second permutation $\sigma
\in S_h$ which stores the relabeling on the vertices $\{n+1,\ldots,
n+h\}$. Any refinement on the partition $\mathfrak{P}$ during the
backtrack search gives us the possibility to restart this refinement
process on the relabeled, newly colored incidence graph again.

\subsection{Preprocessing and the backtrack tree}\label{subsec:backtracking}

In the same manner, we start the algorithm for the sequence $\C$ by some preprocessing routine, which is in fact an $\GammaL{k}(q)$-homomorphism computing a set $\H$ of hyperplanes:
For each hyperplane $\Point{v}^\perp \in \PSqk$ we may compute the number of elements it contains from each cell of $\mathfrak{P}_0$. This results in a unique partition of the set of all hyperplanes. We choose a well-defined subset $\H := \bigcup_{i=1}^{m'} \H_i$, which is a union of the blocks of this partition in such a way that it contains $k$ linearly independent normal vectors. This set should also be reasonably small.

\begin{exmp} \label{ex:preprocessing}
Let $\C := \{ \U_1, \U_2, \U_3 \} \subset \PS{3}{4}$ with $\U_i$ generated by $U_i$:

\begin{equation*}
\begin{array}{ccc}
U_1 := \begin{pmatrix}
1 & 0 & 0 & 0\\
0 & 1 & 0 & 0
        \end{pmatrix}^T &
U_2 := \begin{pmatrix}
0 & 0 & 1 & 0\\
0 & 0 & 0 & 1
        \end{pmatrix}^T&
U_3 := \begin{pmatrix}
1 & 0 & 0 & 0\\
0 & 0 & 1 & 0
        \end{pmatrix}^T
\end{array}
\end{equation*}
The hyperplanes $\H_1 := \left\{\Point{(0,1,0,0)^T}^\perp, \Point{(0,0,0,1)^T}^\perp \right\}$ contain exactly $2$ elements of $\C$, whereas the elements $\Point{(0,1,0,\nu)^T}^\perp$, $\Point{(\mu,1,0,0)^T}^\perp$ and $\Point{(0,0,\mu,1)^T}^\perp$ for $\nu \in \F_3^\ast, \mu \in \F_3$  contain a single element of $\C$. They form the set $\H_2$. The remaining hyperplanes contain none of the elements from $\C$. We choose the set $\H = \H_1 \cup \H_2$ since it is the smallest set spanning the whole vector space.
\end{exmp}

In difference to the algorithm for linear codes described above, we do not only use this set as a tool for the refinement of a node. In fact, we append the sequence $(\H_1, \ldots, \H_{m'})$ to $\C$ and it plays an active role in the construction of the backtrack search tree, as we will see later. We are allowed to do so because of the following proposition:

\begin{prop}
Let $G$ be a group which acts on $X$ and $f:X\rightarrow Y$ a
$G$-homomorphism. Suppose there is a canonization algorithm
$\Can_G((x,f(x)))$ for the action of $G$ on $\{(x, y) \mid x \in X,
y \in Y\}$. Then, this defines a canonization algorithm on $X$ via

\begin{equation*}
\Can_G(x) := \left( \CF_G((x, f(x)))_1, \TR_G((x, f(x))), \Stab{G}{(x, f(x))} \right).
\end{equation*}
\end{prop}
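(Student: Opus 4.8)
The plan is to show that the triple defined for $\Can_G(x)$ satisfies the three defining conditions of a canonization map, using the given canonization algorithm $\Can_G((x,f(x)))$ for the action of $G$ on the product-like set $\{(x,y)\mid x\in X,\, y\in Y\}$, where the action is the diagonal one $g\cdot(x,y) := (gx, gy)$. The key observation driving everything is that since $f$ is a $G$-homomorphism, the map $x\mapsto (x, f(x))$ is a $G$-equivariant \emph{injection} of $X$ into this product set, and its image $\{(x,f(x))\mid x\in X\}$ is a $G$-invariant subset. Thus the canonization on the larger set restricts sensibly to the graph of $f$.

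First I would verify the canonical form property. For $g\in G$ we have $(gx, f(gx)) = (gx, g f(x)) = g\cdot(x,f(x))$ because $f$ is a $G$-homomorphism. Hence $\CF_G((gx, f(gx))) = \CF_G(g\cdot(x,f(x))) = \CF_G((x,f(x)))$ by property of $\Can_G$ on the product set, so in particular the first component $\CF_G((x,f(x)))_1$ is unchanged; this gives $\CF_G(gx) = \CF_G(x)$. Next, the transporter property: $\TR_G((x,f(x)))\cdot(x,f(x)) = \CF_G((x,f(x)))$, and applying the projection onto the first component (which is $G$-equivariant) yields $\TR_G((x,f(x)))\cdot x = \CF_G((x,f(x)))_1 = \CF_G(x)$, as required. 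Finally, the stabilizer: here I would argue that $\Stab{G}{(x,f(x))} = \Stab{G}{x}$. The inclusion $\Stab{G}{x}\subseteq \Stab{G}{(x,f(x))}$ is clear since $gx = x$ forces $f(gx)=f(x)$, i.e. $g f(x) = f(x)$; the reverse inclusion is immediate from the projection $(x,f(x))\mapsto x$ being $G$-equivariant, since $g\cdot(x,f(x)) = (x,f(x))$ in particular gives $gx=x$. Hence the stabilizer returned by $\Can_G((x,f(x)))$ is exactly $\Stab{G}{x}$, and the third condition holds.

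The main subtlety — not really an obstacle, but the point that needs care in writing — is to be precise about which group action on the set $\{(x,y)\mid x\in X,\, y\in Y\}$ is meant: it is the diagonal action $g\cdot(x,y) = (gx,gy)$, and the statement implicitly assumes a canonization algorithm for \emph{that} action is available. One should also note that the canonical form $\CF_G((x,f(x)))$ automatically lies in the graph of $f$ (because the graph is $G$-invariant and the canonical form lies in the orbit $G(x,f(x))$), so taking its first component loses no information and $\CF_G(x)$ as defined is indeed an element of $X$ lying in the orbit $Gx$. With these remarks in place the three conditions follow by the straightforward equivariance arguments above, and the proof is essentially a direct verification.

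\begin{proof}
Consider the diagonal action $g\cdot(x,y):=(gx,gy)$ of $G$ on $\{(x,y)\mid x\in X,\, y\in Y\}$, for which a canonization algorithm $\Can_G$ is assumed available. Since $f$ is a $G$-homomorphism, $f(gx)=gf(x)$, so $(gx,f(gx)) = g\cdot(x,f(x))$ for all $g\in G$; in particular the graph $\{(x,f(x))\mid x\in X\}$ is $G$-invariant, and the projection $p:(x,y)\mapsto x$ onto the first component is $G$-equivariant.

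\emph{Canonical form.} For $g\in G$ we obtain
\[
\CF_G\bigl((gx,f(gx))\bigr) = \CF_G\bigl(g\cdot(x,f(x))\bigr) = \CF_G\bigl((x,f(x))\bigr),
\]
hence $\CF_G(gx) = \CF_G\bigl((x,f(x))\bigr)_1 = \CF_G(x)$.

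\emph{Transporter.} From $\TR_G\bigl((x,f(x))\bigr)\cdot(x,f(x)) = \CF_G\bigl((x,f(x))\bigr)$ and the $G$-equivariance of $p$ we get
\[
\TR_G(x)\, x = p\bigl(\TR_G\bigl((x,f(x))\bigr)\cdot(x,f(x))\bigr) = p\bigl(\CF_G\bigl((x,f(x))\bigr)\bigr) = \CF_G(x).
\]

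\emph{Stabilizer.} If $gx=x$ then $f(gx)=f(x)$, i.e. $gf(x)=f(x)$, so $g\in\Stab{G}{(x,f(x))}$. Conversely, if $g\cdot(x,f(x))=(x,f(x))$ then applying $p$ gives $gx=x$. Thus $\Stab{G}{(x,f(x))} = \Stab{G}{x}$, which is precisely the third component returned. Finally, $\CF_G\bigl((x,f(x))\bigr)$ lies in the orbit $G(x,f(x))$ and hence in the $G$-invariant graph of $f$, so its first component lies in $Gx\subseteq X$; therefore $\CF_G(x)\in X$ is well-defined and lies in $Gx$. This establishes all three conditions.
\end{proof}
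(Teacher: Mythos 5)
Your proof is correct: the identification of the diagonal action, the $G$-equivariance of the graph map $x\mapsto(x,f(x))$ and of the projection $p$, and the verification of the three defining conditions (including the equality $\Stab{G}{(x,f(x))}=\Stab{G}{x}$) are exactly what is needed. The paper states this proposition without proof, and your argument is the natural direct verification it implicitly relies on.
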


One reason for this decision is the fact that the partition $\mathfrak{Q}$ of $\Set{h}, h:= \vert \H \vert$  allows us to perform the individualization step on a cell with smaller cardinality and hence results in a smaller branching factor, which we see as an advantage. On the other hand, we realized that it is much more difficult to give an efficient canonization algorithm for the action of

\begin{equation*}
\left( \GL{k}(q) \times  \prod_{i=1}^{m} \GL{s_i}(q)^{n_i} \right) \rtimes \Aut(\F_q)
\end{equation*}
on the sequence of subspaces whose coordinates are fixed by $S_{\mathfrak{P}}$.

Let $\mathfrak{P}$ be a refinement of $\{\{1,\ldots, n\}, \{n+1,\ldots, n+h\}\}$. In the following we use $\mathfrak{P}_\C$ to refer to the corresponding partition of $\Set{n}$ and $\mathfrak{P}_\H$ to refer to the partition of the set $\Set{h}$ arising from the partition of the last $h$ coordinates given by $\mathfrak{P}$.

Instead of representing the hyperplane $H \in \H$ using a matrix $U \in \F_q^{k\times (k-1), k-1}$ we use its dual space which could be represented by a single vector $v \in \F_q^k$ -- a normal vector of $H$. Since $(A,\alpha)H^\perp = \Point{({A^{-1}}^T, \alpha)v}^\perp$, see Proposition \ref{prop:actionDual}, we have to keep in mind that the action is differently defined on them.

Altogether, we are going to develop a canonization algorithm for the action of $\Gsl \rtimes S_{\mathfrak{P}_0}$ on
$\prod_{i=1}^{m} (\Mat{k}{s_i})^{n_i} \times (\F_q^{k})^h$ where

\begin{equation*}
\Gsl := \left( \GL{k}(q) \times  \prod_{i=1}^{m} \GL{s_i}(q)^{n_i}  \times {\F_q^\ast}^h\right) \rtimes \Aut(\F_q)
\end{equation*}
and $\mathfrak{P}_0$ is the partition of $\Set{n+h}$ given by the different subsets.
The action of $\Gsl$ on
a vector $(U, V)$ is defined in the following way:

\begin{align*}
& \left(A, (B_i)_{i\in \Set{n}}, (b_j)_{j\in \Set{h}}, \tau^a
\right) \left((U_i)_{i\in\Set{n}}, (v_j)_{j\in\Set{h}}\right)
\\  =& \left(\left(A\tau^a(U_i) B_i^T \right)_{i\in\Set{n}},
\left({A^{-1}}^{T} \tau^a(v_j) b_j\right)_{j\in\Set{h}}\right)
\end{align*}

Before we are going to describe the rules for building up the
backtrack tree, in particular how to define an analogue inner
minimization procedure and suitable
$S_{\mathfrak{P}}$-homomorphisms, we shortly summarize which
information should be contained in each node:

\begin{itemize}
\item A permutation $\pi \in S_{n+h}$ and a standard partition $\mathfrak{P}$ of $\Set{n+h}$ describing the state of the backtrack tree analogously to Section \ref{subsec:part_refinement}.
\item A vector $(U^{(\mathfrak{P},\pi)}, V^{(\mathfrak{P},\pi)}) \in \left(\prod_{i=1}^{n} \Mat{k}{s(i), s(i)}\right) \times (\F_q^k)^h$, storing the semicanonical representative of this node.
\item A subgroup $\Inn^{(\mathfrak{P},\pi)} \leq \left(\GL{n}(q) \times \prod_{i=1}^{n} \GL{s(i)}(q) \times (\F_q^\ast)^h \right) \rtimes \Aut(\F_q)$ which stores the stabilizer under the inner group action.
\end{itemize}

Similar to Remark \ref{rem:backtrack_interpretation}, we can
interpret this backtracking as an algorithm which computes
$\Can_{S_{\mathfrak{P}_0}}\left(\Gsl(U, V)\right)$ or as the
canonization $\Can_{\Gsl \rtimes S_{\mathfrak{P}_0}}(U,V)$.
Altogether, this solves our initial canonization problem for
sequences of subsets in the projective space.

\begin{exmp}[Example \ref{ex:preprocessing} continued] \label{ex:preprocessing_cont}
We can choose

\begin{equation*}
(U^{(\mathfrak{P}_0, \id)}, V^{(\mathfrak{P}_0, \id)}) :=  \left(\begin{array}{cc:cc:cc||cccccccc|cc}
1&0&1&0&0&0  &0&0&0 &0&0 &1&1&1& 0&0\\
0&1&0&0&0&0  &0&0&0 &2&1 &0&1&2& 1&0\\
0&0&0&1&1&0  &1&1&1 &0&0 &0&0&0& 0&0\\
0&0&0&0&0&1  &0&1&2 &1&1 &0&0&0& 0&1\\
           \end{array} \right)
\end{equation*} as a semicanonical representative of the root node $(\mathfrak{P}_0, \id)$ with the initial partition
$\mathfrak{P}_0 := \{\{1,2,3\}, \{4,\ldots, 11\}, \{12,13\}\}$.

In the above representation of $(U^{(\mathfrak{P}_0, \id)},
V^{(\mathfrak{P}_0, \id)})$ we already included the partition
$\mathfrak{P}_0$: dashed lines mark the end of the matrices
$U_i^{(\mathfrak{P}_0, \id)}$, whereas solid lines mark the end of a
cell of the partition. The double horizontal line shows the change
from $U^{(\mathfrak{P}_0, \id)}$ to $V^{(\mathfrak{P}_0, \id)}$.

Before starting the backtracking, we refine $\mathfrak{P}_0$ based on the incidence graph like in the case of linear codes, see also Subsection \ref{subsubsec:incidence_graph} for a detailed description in our case:
\begin{itemize}
\item We observe that $\U_3$ is different from $\U_1, \U_2$ since it is contained in both hyperplanes in $\H_1$ whereas the other two elements are only contained in a single hyperplane.
\item The hyperplanes which contain $\U_3$ can be distinguished from all others.
\end{itemize}
This leads to the following refinement $\mathfrak{R}_0$ of the root node:

\begin{equation*}
\left(\begin{array}{cc:cc|cc||cccccc|cc|cc}
1&0&0&0& 1&0  &0&0&0  &1&1&1 &0&0& 0&0\\
0&1&0&0& 0&0  &0&0&0  &0&1&2 &2&1& 1&0\\
0&0&1&0& 0&1  &1&1&1  &0&0&0 &0&0& 0&0\\
0&0&0&1& 0&0 &0&1&2  &0&0&0 &1&1& 0&1\\
           \end{array} \right)
\end{equation*}
where we also applied the permutation $(2,3)(7,10)(8,11) \in S_{\mathfrak{P}_0}$.
\end{exmp}

In the following subsections we discuss the generalization of the
inner minimization process and the definition of suitable
$S_{\mathfrak{P}}$-homomorphisms. In Subsection
\ref{sec:Tree_revisited} at the end of this section the whole
backtracking procedure will be summarized.

\subsection{Inner Minimization}\label{subsec:InnerMin}

One main observation of \cite{Feu09} is that (in the computation of a canonical form of a linear code) the $\GL{k}(q)$-component of the stabilizer $\Inn^{(\mathfrak{P},\pi)}$ could be easily stored by a pair $(t, \mathfrak{p})$ where $0\leq t \leq k$ and $\mathfrak{p}$ is a partition of $\Set{t}$, i.e. it is equal to

\begin{flalign*}
 \GL{k}^{(t,\mathfrak{p})}(q) := \left\{\left(\left. \begin{array}{cc}
D & B_1 \\ 0 & B_2  \end{array}\right) \right| \begin{array}{l} D
\in \Mat{t}{t}\textnormal{\ diagonal matrix and constant on all } P
\in \mathfrak{p},
\\ B_1\in \Mat{t}{(k-t)}, B_2\in \GL{k-t}(q) \end{array}  \right\}.
\end{flalign*}

In the case that $\mathfrak{p}$ is the discrete partition we simply denote this group by $\GL{k}^{(t)}(q) := \GL{k}^{(t,\{\{1\}, \ldots, \{t\}\})}(q)$. We will make a similar observation in this work, but since the result is achieved by the action on the normal vectors of the hyperplanes we have to use the transposed group

\begin{equation*}
{\GL{k}^{(t,\mathfrak{p})}(q)}^T := \{ A^T \mid A \in \GL{k}^{(t,\mathfrak{p})}(q) \}
\end{equation*}
 instead. This group has furthermore the following nice property:

\begin{prop}\label{prop:row_fix}
The multiplication of some matrix $A \in {\GL{k}^{(t,\mathfrak{p})}(q)}^T$
from the left stabilizes the first $t$ rows of a matrix $U \in \Mat{k}{s}$ up to scalars.
\end{prop}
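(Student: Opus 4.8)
The claim is an explicit matrix computation, so I would simply unwind the definitions. First I would write a generic element $A \in {\GL{k}^{(t,\mathfrak{p})}(q)}^T$ in transposed block form: since $A^T \in \GL{k}^{(t,\mathfrak{p})}(q)$ has the block shape
\[
A^T = \begin{pmatrix} D & B_1 \\ 0 & B_2 \end{pmatrix},
\]
with $D \in \Mat{t}{t}$ diagonal (and constant on the cells of $\mathfrak{p}$), $B_1 \in \Mat{t}{(k-t)}$ and $B_2 \in \GL{k-t}(q)$, transposing gives
\[
A = \begin{pmatrix} D^T & 0 \\ B_1^T & B_2^T \end{pmatrix} = \begin{pmatrix} D & 0 \\ B_1^T & B_2^T \end{pmatrix},
\]
using that $D$, being diagonal, equals its own transpose. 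The crucial structural feature is therefore that $A$ is block \emph{lower} triangular with a diagonal top-left $t\times t$ block.

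Next I would compute the product $AU$ for $U \in \Mat{k}{s}$, partitioning $U = \begin{pmatrix} U_1 \\ U_2 \end{pmatrix}$ into its first $t$ rows $U_1 \in \Mat{t}{s}$ and remaining $k-t$ rows $U_2 \in \Mat{(k-t)}{s}$. Block multiplication yields
\[
AU = \begin{pmatrix} D & 0 \\ B_1^T & B_2^T \end{pmatrix}\begin{pmatrix} U_1 \\ U_2 \end{pmatrix} = \begin{pmatrix} D U_1 \\ B_1^T U_1 + B_2^T U_2 \end{pmatrix}.
\]
Hence the first $t$ rows of $AU$ are exactly $D U_1$, and since $D = \operatorname{diag}(d_1,\dots,d_t)$ with $d_j \in \F_q^\ast$, the $j$-th row of $AU$ is $d_j$ times the $j$-th row of $U$. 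This is precisely the assertion that the first $t$ rows are stabilized up to (nonzero) scalars.

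I do not expect any real obstacle here; the statement is essentially bookkeeping about block-triangular matrices, and the only point worth stating carefully is that transposition turns the block \emph{upper} triangular group $\GL{k}^{(t,\mathfrak{p})}(q)$ into a block \emph{lower} triangular one, so that left multiplication acts on rows by a diagonal map on the first $t$ coordinates rather than mixing them. I would phrase the proof in exactly the three steps above and keep it to a couple of lines. (One could additionally remark that the constancy of $D$ on the cells of $\mathfrak{p}$ plays no role in this particular statement — it only becomes relevant when one tracks which scalar multiples are realizable — but it costs nothing to carry $\mathfrak{p}$ along.)
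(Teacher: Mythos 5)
Your computation is correct and is exactly the intended (and essentially only) argument; the paper states this proposition without proof, treating it as the evident block-matrix bookkeeping you carry out. Your remark that transposition converts the block upper triangular group into a block lower triangular one, so that the first $t$ rows of $AU$ are $DU_1$ with $D$ diagonal and invertible, captures precisely what is needed.
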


In the following, let $s(j)$ denote the number of columns of the matrix $U_j$, $j \in \Set{n}$.We describe the inner minimization process which is always applied after the partition $\mathfrak{P}$ has been refined to $\mathfrak{R}$ combined with the application of some permutation $\sigma \in S_{\mathfrak{P}}$, i.e. after each individualization step and any successful refinement. The semicanonical representative of $(\mathfrak{R}, \sigma\pi)$ is the one which is derived from the following sequence of applications of the Homomorphism Principle:
\begin{itemize}
\item Let $\Pi_{\FixedSeq{\mathfrak{R_\H}}}(U, V) := V_{\ast, \FixedSeq{\mathfrak{R_\H}}}$ and

\begin{align*}
\varphi_{\FixedSeq{\mathfrak{R_\H}}}:& \Inn^{(\mathfrak{P}, \pi)} &\rightarrow& \left(\GL{k}(q) \times {\F_q^\ast}^{\vert \Fixed{\mathfrak{R_\H}} \vert} \right) \rtimes \Aut(\F_q) \\
&(A, B, b, \tau^a) &\mapsto& (A, {(b_j)}_{j \in \FixedSeq{\mathfrak{R_\H}}},\tau^a)
\end{align*}
The pair $(\Pi_{\FixedSeq{\mathfrak{R_\H}}}, \varphi_{\FixedSeq{\mathfrak{R_\H}}})$ defines a homomorphism of group actions.
\item We will later prove that the $\GL{k}(q)$-component of the stabilizer in the previous step is a subgroup of ${\GL{k}^{(t)}(q)}^T$. In particular, we can use the parameter $t$ in the following definition:\\
Let $\Pi^{(t)}_{\FixedSeq{\mathfrak{R_\C}}}(U, V) := \left( {(U_i)}_{\Set{t}, \ast} \right)_{i\in \FixedSeq{\mathfrak{R_\C}}}$ and

\begin{align*}
\varphi_{\FixedSeq{\mathfrak{R_\C}}}^{(t)}:& \Inn^{(\mathfrak{P}, \pi)} &\rightarrow& \left(\GL{t}(q) \times \prod_{i \in \FixedSeq{\mathfrak{R_\C}}} \GL{s(i)}(q)\right) \rtimes \Aut(\F_q) \\
&(A, B, b, \tau^a) &\mapsto& (A_{\Set{t}, \Set{t}}, {(B_i)}_{i \in {\FixedSeq{\mathfrak{R_\C}}}},\tau^a)
\end{align*}
This defines a homomorphism of group actions.
\item We also apply a third homomorphism of group actions which restricts the components $\GL{s(i)}(q)$ for $i \in \Set{n} \setminus \Fixed{\mathfrak{R}_\C}$. It is only called in special cases and therefore we will state this at some more appropriate place, see Subsection \ref{par:increased_rank_C}.
\end{itemize}

\subsubsection{The structure of the inner stabilizer and semicanonical representatives}\label{subsubsec:Inn_structure}
We start with a description of the inner stabilizer
$\Inn^{(\mathfrak{P}, \pi)}$ which is computed in each step: For
simplicity we just write $\Inn$ and $(U, V)$ in the following. The
description of the computation of a semicanonical representative is
given afterward. It is based on a recursive method and the full
details are stated in Subsection \ref{subsubsec:InnerMinProc}.

After the inner minimization process, the group $\Inn$ can be
expressed by the parameters
\begin{itemize}
\item $(t, \mathfrak{p})$ -- describing the multiplication from the left
\item $(t_i)_{i\in\Set{n}}$ -- describing the multiplication from the right for each sequence element
\item $e$ -- describing the subgroup of field automorphisms
\end{itemize}
in the following way: The group $\Inn$ is the subgroup of

\begin{equation*}
\left( {\GL{k}^{(t, \mathfrak{p})}(q)}^T \times  \prod_{i=1}^{n} \GL{s(i)}^{(t_i)}(q) \times (\F_q^\ast)^h \right) \rtimes \langle \tau^e \rangle
\end{equation*}
containing all elements
$\left(\left(\begin{smallmatrix} D & 0\\ A_1 & A_2 \end{smallmatrix}\right),
{\left(\left(\begin{smallmatrix} E_i & F_i \\ 0 & G_i \end{smallmatrix}\right)\right)}_{i\in\Set{n}}, b, \tau^a\right)$ with the following properties:

\begin{align}
{\left(\begin{smallmatrix} D & 0\\ A_1 & A_2 \end{smallmatrix}\right)^T}^{-1} v_j  b_j = v_j,&\ \forall\ j \in \Fixed{\mathfrak{P}_{\H}} \label{eq:v_fix} \\
D {\left(U_i\right)}_{\Set{t}, \Set{t_i}}  E_i^T = {\left(U_i\right)}_{\Set{t}, \Set{t_i}},&\ \forall\ i \in \Fixed{\mathfrak{P}_{\C}} \label{eq:U_fix}
\end{align}

The integer $t$ is well defined by the rank of
${V}_{\ast,\FixedSeq{\mathfrak{P}_{\H}}}$. Furthermore, the inner
minimization ensures that this matrix is in reduced row echelon
form. Similarly, the integer $t_i$ is well-defined by the rank of
the submatrix ${(U_i)}_{\Set{t},\ast}$ consisting of the first $t$
rows of $U_i$. The inner minimization produces a special structure
of these matrices, i.e.: ${(U_i)}_{\Set{t}, \ast} = \left(
{(U_i)}_{\Set{t}, \Set{t_i}} , 0\right)$ where ${(U_i)}_{\Set{t},
\Set{t_i}} \in \Mat{t}{t_i}$ is in reduced column echelon form up to
scalars.

The partition $\mathfrak{p}$ is equal to the finest partition whose cells contain the
supports of all vectors $v_j, j \in \Fixed{\mathfrak{P}_{\H}}$ and the supports of the columns of ${(U_i)}_{\Set{t}, \Set{t_i}}, i \in \Fixed{\mathfrak{P}_{\C}}$. The exponent $e$ is equal to the least positive power of the Frobenius automorphism which fixes all entries of $v_j, j \in \Fixed{\mathfrak{P}_{\H}}$ and ${(U_i)}_{\Set{t}, \Set{t_i}}, i \in \Fixed{\mathfrak{P}_{\C}}$.

\begin{cor}\label{cor:ext_inner}
For $A \in {\GL{k}^{(t, \mathfrak{p})}(q)}^T$ there exists a group element

\begin{equation*}
(A, B(A), b(A), \tau^0) \in \Inn.
\end{equation*} Furthermore,
$(I_k, I_{s(1)},\ldots, I_{s(n)}, 1^h, \tau^e)$ is an element of $\Inn$.
\end{cor}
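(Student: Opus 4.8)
The plan is to read off both claims directly from the explicit description of $\Inn$ given in the preceding paragraphs. Recall that $\Inn$ was characterized as the subgroup of $\left( {\GL{k}^{(t, \mathfrak{p})}(q)}^T \times  \prod_{i=1}^{n} \GL{s(i)}^{(t_i)}(q) \times (\F_q^\ast)^h \right) \rtimes \langle \tau^e \rangle$ consisting of those tuples $\left(\left(\begin{smallmatrix} D & 0\\ A_1 & A_2 \end{smallmatrix}\right), {\left(\left(\begin{smallmatrix} E_i & F_i \\ 0 & G_i \end{smallmatrix}\right)\right)}_{i\in\Set{n}}, b, \tau^a\right)$ satisfying the fixing conditions \eqref{eq:v_fix} and \eqref{eq:U_fix}. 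So proving the corollary amounts to exhibiting, for each $A \in {\GL{k}^{(t, \mathfrak{p})}(q)}^T$, a completion of $A$ to a tuple with trivial field-automorphism part that meets \eqref{eq:v_fix} and \eqref{eq:U_fix}, and separately to verifying that the pure field automorphism $\tau^e$ (with all matrix parts identity and all scalars $1$) already satisfies these conditions.

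For the first claim I would argue as follows. Fix $A \in {\GL{k}^{(t, \mathfrak{p})}(q)}^T$; by Proposition \ref{prop:row_fix}, left multiplication by $A$ stabilizes the first $t$ rows of every matrix $U_i$ up to scalars, and in fact, writing $A^T = \left(\begin{smallmatrix} D & B_1 \\ 0 & B_2\end{smallmatrix}\right)$ with $D \in \Mat{t}{t}$ diagonal and constant on the cells of $\mathfrak{p}$, the top-left block of $A$ itself is the diagonal matrix $D$. Condition \eqref{eq:v_fix} asks that $(A^T)^{-1} v_j b_j = v_j$ for $j \in \Fixed{\mathfrak{P}_\H}$: since each such $v_j$ has support contained in a single cell of $\mathfrak{p}$ (by the definition of $\mathfrak{p}$ as the finest partition whose cells contain these supports) and is a normal vector in reduced row echelon form supported on $\Set{t}$, the action of $A^T$ on $v_j$ is multiplication by the constant value $\delta$ of $D$ on that cell — note the off-diagonal block $B_1$ does not interfere because $v_j$ is supported on the first $t$ coordinates and $B_2$ acts only on the last $k-t$. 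Hence setting $b_j := \delta$ (the appropriate cell constant) makes \eqref{eq:v_fix} hold; for the remaining coordinates $j \notin \Fixed{\mathfrak{P}_\H}$ one may take $b_j := 1$. Condition \eqref{eq:U_fix} asks that $D\,(U_i)_{\Set{t},\Set{t_i}}\, E_i^T = (U_i)_{\Set{t},\Set{t_i}}$ for $i \in \Fixed{\mathfrak{P}_\C}$: since $(U_i)_{\Set{t},\Set{t_i}}$ is in reduced column echelon form up to scalars and $D$ is constant on the cells of $\mathfrak{p}$, which by construction refine the column supports, left multiplication by $D$ scales each column of $(U_i)_{\Set{t},\Set{t_i}}$ by the cell constant; this scaling is undone by a diagonal $E_i \in \GL{t_i}(q)$, and one sets $G_i := I_{s(i)-t_i}$, $F_i := 0$, and $E_i := I_{t_i}$ (resp. $G_i := I_{s(i)}$) for $i \notin \Fixed{\mathfrak{P}_\C}$. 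This produces the desired element $(A, B(A), b(A), \tau^0) \in \Inn$.

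For the second claim, the exponent $e$ was defined as the least positive power of the Frobenius fixing every entry of the vectors $v_j$, $j \in \Fixed{\mathfrak{P}_\H}$, and of the matrices $(U_i)_{\Set{t},\Set{t_i}}$, $i \in \Fixed{\mathfrak{P}_\C}$. Taking the tuple $(I_k, I_{s(1)}, \ldots, I_{s(n)}, 1^h, \tau^e)$, condition \eqref{eq:v_fix} reads $\tau^e(v_j)\cdot 1 = v_j$, which holds precisely because $\tau^e$ fixes the entries of $v_j$; likewise \eqref{eq:U_fix} reads $\tau^e((U_i)_{\Set{t},\Set{t_i}}) = (U_i)_{\Set{t},\Set{t_i}}$, which holds by the choice of $e$. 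Hence this tuple lies in $\Inn$ as well. The only point requiring genuine care — and the one I would write out most carefully — is that in the first claim the scalars $b_j$ and the right-hand diagonal matrices $E_i$ can be chosen \emph{consistently}, i.e.\ that the single diagonal matrix $D$ (which is forced by $A$) is simultaneously undoable on all the $v_j$ and all the columns of the $(U_i)$; this is exactly what the definition of $\mathfrak{p}$ as the common refinement of all these supports guarantees, so the verification is routine once that definition is invoked, but it is the crux of why the corollary is true rather than a triviality.
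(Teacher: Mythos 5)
Your proposal is correct and takes essentially the same route as the paper, which states the corollary without proof because it is immediate from the explicit description of $\Inn$ via conditions (\ref{eq:v_fix}) and (\ref{eq:U_fix}) together with the definitions of $\mathfrak{p}$, the $t_i$ and $e$ — precisely the verification you spell out. Your emphasis on the definition of $\mathfrak{p}$ as the common refinement of the supports being the reason the scalars $b_j$ and diagonal blocks $E_i$ can be chosen consistently is exactly the right point to highlight.
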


\subsubsection{Inner Minimization Process}\label{subsubsec:InnerMinProc}
In the following, we are going to describe the inner minimization process in detail. Let $(\mathfrak{P}, \pi)$ be the partition of the predecessor node and $(\sigma\pi, \mathfrak{R})$ the actual node. For simplicity we suppose that the permutation $\sigma$ was already applied to the sequence $(U^{(\mathfrak{P}, \pi)}, V^{(\mathfrak{P}, \pi)})$ and that we have to compute
$(U^{(\mathfrak{R}, \sigma\pi)}, V^{(\mathfrak{R}, \sigma\pi)})$.

In the case that $\Fixed{\mathfrak{P}} = \Fixed{\mathfrak{R}}$ there
is nothing to do. Otherwise, we successively modify the elements of
the sequence corresponding to the indices $i \in
\Fixed{\mathfrak{R}}$ starting from those components corresponding
to the hyperplanes. Additionally, we have to give rules how to
change $\Inn$ in each step in order to guarantee that the $i$-th
entry of the semicanonical representative does fulfill Equations
(\ref{eq:v_fix}) and (\ref{eq:U_fix}).

The procedures of the next two paragraphs are converted from the
algorithm for linear codes, see \cite[Algorithm 1]{Feu09}.

\paragraph{Increased Rank}\label{par:increased_rank}
Suppose that the normal vector $v_j, j \in \Fixed{\mathfrak{R}_\H} \setminus \Fixed{\mathfrak{P}_\H}$ contains some nonzero entry in the set $\{t+1, \ldots, k\}$. In this case we can perform some elementary row operations in order to map $v_j$ to the unit vector $e_{t+1}$, i.e. there is some matrix $A = \left(\begin{smallmatrix} I_t & A_1 \\ 0 & A_2                                                                                                                                                                                                                                                                                                                                                                                                                                                                                                                                                                                                            \end{smallmatrix}\right)
 \in \GL{k}^{(t,\mathfrak{p})}$ such that $A v_j = e_{t+1}$. In particular, applying the group element $({A^{-1}}^T, I_{s(1)},\ldots, I_{s(n)}, 1^h, \tau^0) \in \Inn$ leads to this result.

The new stabilizer can be described by $t+1$ and the partition $\mathfrak{p} \cup \{\{t+1\}\}$. Condition (\ref{eq:v_fix}) ensures that this vector can not be changed anymore.

\paragraph{Same Rank}\label{par:same_rank}
In the second case, the support of the newly fixed normal vector $v_j, j \in \Fixed{\mathfrak{R}_\H} \setminus \Fixed{\mathfrak{P}_\H}$ is contained in the set $\Set{t}$. In this case we can use each block
$P \in \mathfrak{p} : P \cap \operatorname{supp}(v_j) \ne \emptyset$ in order to map the nonzero entry $(v_j)_i$ with $i := \max\left(P\cap \operatorname{supp}(v_j)\right)$ onto $1_{\F_q}$. This is done using the simultaneous multiplication of all rows indexed by $i'\in P$ with $(v_j)_i^{-1}$ which corresponds to the multiplication by the matrix
$A := \left(\begin{smallmatrix}
D & 0 \\ 0 & I_{k-t}
 \end{smallmatrix}\right) \in \GL{k}^{(t,\mathfrak{p})}(q)$ with $D_{i',i'} = (v_j)_i^{-1}$ for $i' \in P$.
Corollary \ref{cor:ext_inner} ensures that we find a group element having the necessary $\GL{k}(q)$-component ${A^T}^{-1}$. Furthermore, we can choose such a group element such that $b({A^T}^{-1})_j=1$.

Finally, we may use the remaining field automorphisms $\langle (I_k, I_{s(1)}, \ldots, I_{s(n)}, 1^h, \tau^e) \rangle \leq \Inn$ to minimize the remaining nonzero, non-identity entries of the vector $v_j$ starting from the highest index.

In the following, we are not allowed to multiply two different
nonzero elements of this vector by different units since we are only
able to revert these multiplications by the multiplication of the
whole column with the same element $b_j$. Hence, the new stabilizer
can be described by $t$ and $\mathfrak{p}' := \{P \in \mathfrak{p}
\mid P \cap \operatorname{supp}(v_j) = \emptyset\} \cup \{
\bigcup_{P \in \mathfrak{p} : P \cap \operatorname{supp}(v_j) \neq
\emptyset} P \}$. The stabilizer under the field automorphisms can
be expressed by the smallest multiple $e'$ of $e$ such that
$\tau^{e'}(v_j) = v_j$.

\begin{exmp}[Example \ref{ex:preprocessing_cont} continued] \label{ex:preprocessing_cont2}
Suppose that the target cell selection told us to split the last
cell $\{12,13\} \in \mathfrak{R}_0$ in the individualization step.
Suppose we are in the branch of the backtracking where we applied
the identity element. The refined partition $\mathfrak{P}'$ contains
two singletons $\{12\}$ and $\{13\}$. Minimizing $v_{9}$ yields the
representative

\begin{equation*}
\left(\begin{array}{cc:cc|cc||cccccc|cc|c|c}
0&1&0&0& 0&0  &0&0&0  &0&1&2 &2&1& 1&0\\ \hhline{*{6}-::}
1&0&0&0& 1&0  &0&0&0  &1&1&1 &0&0& 0&0\\
0&0&1&0& 0&1  &1&1&1  &0&0&0 &0&0& 0&0\\
0&0&0&1& 0&0 &0&1&2  &0&0&0 &1&1& 0&1\\
           \end{array} \right)
\end{equation*} and the inner stabilizer with parameters $t=1, \mathfrak{p}=\{\{1\}\}$. The horizontal line shows the parameter $t$, which decomposes the matrices $U_i$ into two submatrices. Similarly, we perform some elementary row operations on the second fixed coordinate $v_{10}$, leading to

\begin{equation*}
\left(\begin{array}{cc:cc|cc||cccccc|cc|c|c}
0&1&0&0& 0&0  &0&0&0  &0&1&2 &2&1& 1&0\\
0&0&0&1& 0&0  &0&1&2  &0&0&0 &1&1& 0&1\\  \hhline{*{6}-::}
1&0&0&0& 1&0  &0&0&0  &1&1&1 &0&0& 0&0\\
0&0&1&0& 0&1  &1&1&1  &0&0&0 &0&0& 0&0\\
           \end{array} \right)
\end{equation*} and the inner stabilizer parameter $t=2, \mathfrak{p}=\{\{1\},\{2\}\}$.

Suppose that we would also have to minimize the entry $v_7 = (2,1,0,0)^T$ in the next step under this stabilizer. Then we would be able to minimize $v_7$ by the multiplication with the matrix
$A := \left(\begin{smallmatrix}
2& 0&0\\ 0 & 1 & 0\\ 0 & 0 & I_2
 \end{smallmatrix}\right)$, i.e. we would apply a group element

\begin{equation*}
({A^{-1}}^T, B({A^{-1}}^T), b({A^{-1}}^T), \tau^0).
\end{equation*}
The stabilizer would be modified such that the $\GL{4}(3)$-component would be equal to $\GL{4}^{(2, \{1,2\})}(3)$.
\end{exmp}

For the minimization of the sequence elements $U_i$ of $U$ we
distinguish two cases. The first is, that we fixed at least one
normal vector of a hyperplane whose minimization led to an increased
parameter $t$, see Subsection \ref{par:increased_rank}. In this case
we have to update all matrices $U_i$ for $i \in \Set{n}$, i.e. we
have to perform the following procedure:

\paragraph{Increasing rank for the sequence of newly fixed normal vectors}\label{par:increased_rank_C}
In this case, we know that the $\GL{k}$-component
$\Pi_{\GL{k}}(\Inn)$ of $\Inn$ is a subgroup of ${\GL{k}^{(t)}}^T$
after finishing \ref{par:increased_rank}. Proposition
\ref{prop:row_fix} ensures that these matrices stabilize
${(U_i)}_{\Set{t},\ast}$ up to scalar multiplications of the rows.
Hence, we can use the action from the right in order to map
${(U_i)}_{\Set{t},\ast}$ onto its well-defined reduced column
echelon form
$\operatorname{RCEF}\left({(U_i)}_{\Set{t},\ast}\right)$. In fact,
it is sufficient to produce the reduced column echelon form up to
multiplications of the columns by elements in $\F_q^\ast$. This is
due to the fact that at this point there is no decision on the
ordering of the elements of those cells of $\mathfrak{R}_\C$ which
are not singletons.

This canonization corresponds to the missing third homomorphism of
group actions:

\begin{alignat*}
 \theta(U, V) &:=& \left( \left( \GL{t}^{(t)}(q) \times \GL{s(i)}^{(s(i))}(q) \right) \operatorname{RCEF}(\operatorname{Rows}_t(U_i))\right)_{i\in \Set{n}} \\
 \varphi^{(t)}_{\Set{n}} : \Inn &\rightarrow& \left( {\GL{t}(q)}^T \times \prod_{i=1}^n \GL{s(i)}(q) \right) \rtimes \Aut(\F_q)\\
  (A, B, b, \tau^a)  &\mapsto&  (A_{\Set{t},\Set{t_i}}, B, \tau^a)
\end{alignat*}

\begin{rem}\label{rem:subset}
The mapping $\Gsl(U, V) \mapsto \Can_{\left( {\GL{t}(q)}^T \times \prod_{i=1}^n \GL{s(i)}(q) \right) \rtimes \Aut(\F_q)} \left( \theta(U, V) \right)$ is an $S_{\mathfrak{P}_\C}$-homomorphism, but it is not $S_{\mathfrak{P}_\C}$-invariant. Therefore, we also could derive a refinement of $\mathfrak{P}_\C$ in the application of this minimization. Nevertheless, we included this refinement in the inner minimization process, since it allows us to reduce the inner stabilizer, too.
\end{rem}

\begin{exmp}[Example \ref{ex:preprocessing_cont2} continued] \label{ex:preprocessing_cont3}
Since the minimization of the normal vectors changed the parameter $t$ of the inner stabilizer, we have to perform this step for each $i\in \Set{3}$, resulting in $U$ equal to

\begin{equation*}
\left(\begin{array}{cc:cc|cc}
1&0&0&0& 0&0 \\
0&0&1&0& 0&0 \\ \hhline{*{6}-}
0&1&0&0& 1&0 \\
0&0&0&1& 0&1 \\
           \end{array} \right).
\end{equation*}
Since we want to preserve the $(2\times 1)$ upper left submatrices of $U_1$ and $U_2$ up to row and column multiplications, we choose the parameters $(t_1, t_2, t_3) = (1,1,0)$. This also leads to a refinement of the cell $\{1,2\}$.
\end{exmp}

Finally, in a last step we guarantee the minimality of ${(U_i)}_{\Set{t},\ast}$ for those indices $i \in \Fixed{\mathfrak{R}_\C}$. We use the methods described in the following in order to guarantee that ${(U_i)}_{\Set{t},\Set{t_i}}$ is minimal and unchanged.

\paragraph{Singletons}\label{par:singletons}
This procedure is applied to all $i \in \Fixed{\mathfrak{R}_\C}$. We
can prove that the set of matrices we are allowed to apply on this
component via a multiplication with the transpose from the right is
equal to the subgroup ${\GL{s(i)}^{(t_i, \mathfrak{p}_i)}}$ for some
partition $\mathfrak{p}_i$ of $\Set{t_i}$. The minimization in
Subsection \ref{par:increased_rank_C} guarantees that the submatrix
${(U_i)}_{\Set{t}, \ast}$ decomposes as ${(U_i)}_{\Set{t}, \ast} =
\left( {(U_i)}_{\Set{t}, \Set{t_i}} , 0\right)$ where
${(U_i)}_{\Set{t}, \Set{t_i}}$ is up to scalars a $(t \times
t_i)$-matrix in reduced column echelon form. The action of
${\GL{k}^{(t, \mathfrak{p})}}$ from the left and ${\GL{s(i)}^{(t_i,
\mathfrak{p}_i)}}$ from the right only multiplies rows or columns of
${(U_i)}_{\Set{t}, \Set{t_i}}$ by nonzero entries of the field
$\F_q$. We can easily compute the smallest possible image of
${(U_i)}_{\Set{t}, \Set{t_i}}$ under this simultaneous action by
examining the operation column by column (the same algorithm as for
the normal vectors, cf. \ref{par:same_rank}). Similarly, in a
subsequent step, we use the remaining automorphisms of the field,
i.e. the group $\langle \tau^e \rangle$ for the minimization of the
nonzero entries.

\begin{exmp}
We give a bigger example over $\F_{16}$. Let $\xi \in \F_{16}$ be a primitive element and suppose we have to minimize

\begin{equation*}
U_i := \left(\begin{array}{ccc}
\xi^8 & 0 & 0\\
0 & \xi^{10} & 0\\
0 & \xi^8 & 0 \\
\xi^{12} & \xi^{4} &0 \\ \hline
\xi & \xi^7 & \xi^2
       \end{array}\right)
\end{equation*}
under the action of the inner stabilizer, whose $\GammaL{5}(16)$-component is defined by $t=4, \mathfrak{p}=\{\{1,2\},\{3\},\{4\}\}$ and $e=1$. The index $i \in [n]$ should be a newly produced singleton of the actual partition $\mathfrak{P}$ and we already produced a reduced column echelon form (up to scalars) by the methods in the previous paragraph.

First of all we must minimize the first column leading to $\overline{U}_i$ using the diagonal matrix with diagonal entries $(\xi^7, \xi^7, 1, \xi^{3})$.

\begin{align*}
\overline{U}_i := \left(\begin{array}{ccc}
1 & 0 & 0\\
0 & \xi^{2} & 0\\
0 & \xi^8 & 0 \\
1 & \xi^{7} &0 \\ \hline
\xi & \xi^7 & \xi^2
       \end{array}\right) & \quad
\widetilde{U}_i := \left(\begin{array}{ccc}
1 & 0 & 0\\
0 & \xi^{10} & 0\\
0 & 1 & 0 \\
1 & 1 &0 \\ \hline
\xi & \xi^7 & \xi^2
       \end{array}\right) & \quad
U_i' := \left(\begin{array}{ccc}
1 & 0 & 0\\
0 & \xi^{5} & 0\\
0 & 1 & 0 \\
1 & 1 &0 \\ \hline
\xi^2 & \xi^{14} & \xi^{4}
       \end{array}\right)
\end{align*}
The Frobenius automorphism stabilizes all entries of the newly produced column.
In the following, we are only allowed to apply diagonal matrices which are constant on the cells $\{1,2,4\}$ and $\{3\}$. In the next step, we minimize the second column using those row multiplications. In order to map the entry $\xi^7$ to $1$ we use the simultaneous multiplication by $(\xi^8, \xi^8, 1, \xi^8)$. Note that we can revert this multiplication on the first column via the multiplication of the whole column with $\xi^7$. Furthermore, we multiply the third row by $\xi^7$, leading to $\widetilde{U}_i$. Finally, we use the Frobenius automorphism to minimize the only nonzero and non-identity entry $\xi^{10}$ to $\xi^5$, see $U_i'$.
The $\GammaL{5}(16)$-component of the inner stabilizer is defined by $t'=4, \mathfrak{p}'=\{1,2,3,4\}$ and $e'=2$.
\end{exmp}

\subsection{Refinements}\label{subsec:refinements}
One of the most crucial tasks in the algorithm is the pruning of subtrees. We have already mentioned how the group of known automorphisms could be used for this task. Furthermore, Subsection \ref{subsec:InnerMin} allows us to prune subtrees based on the semicanonical representative of the node. Since those mappings heavily depend on $\Inn^{(\mathfrak{P},\pi)}$, they usually will have  poor performance on the first levels of the search tree. The same holds for the homomorphism we are going to introduce in the next subsection.

\subsubsection{Inner Minimization Refine}\label{subsub:innerMin_refine}
Similar to Remark \ref{rem:subset}, the mapping

\begin{align*}
 \theta^{\operatorname{Subset}} : \Gsl \pi (U,V) \mapsto \left( \operatorname{Subset}_{\Set{t}}(\operatorname{supp}(v^{(\mathfrak{P},\pi)}_j))\right)_{j\in\Set{h}},
\end{align*}
where $\operatorname{Subset}_{\Set{t}}( X ) :=
\begin{cases}
1,& \textnormal{if } X \subseteq \Set{t}\\
0,& \textnormal{else}
\end{cases}$ for $X \subseteq \Set{k}$, defines an $S_{\mathfrak{P}_{\H}}$-homomorphism.

Furthermore, we can predict the result of the inner minimization for the unfixed normal vectors and use this is as an $S_{\mathfrak{P}_{\H}}$-homomorphism as well:

\begin{equation*}
 \theta^{\operatorname{min}, \H} : \Gsl \pi (U,V) \mapsto \left( \min_{ (A,\tau^a) \in \GL{k}^{(t, \mathfrak{p})}(q)\rtimes \langle\tau^e\rangle } (A \tau^a(v^{(\mathfrak{P},\pi)}_j) \right)_{j\in\Set{h}}
\end{equation*}
This function is easily computable. In the case that $\theta^{\operatorname{Subset}}((U,V))_j = 0$ we know that $\theta^{\operatorname{min}, \H}(U, V)_j = e_{t+1}$. Otherwise, we can use the methods described in Subsection \ref{par:same_rank} in order to compute the smallest possible representative.

We can make similar computations on the positions $i\in \Set{n}$, i.e. we can define the $S_{\mathfrak{P}_{\C}}$-homomorphism $\theta^{\operatorname{min}, \C}$ with

\begin{equation*}
{\left(\theta^{\operatorname{min}, \C}(\Gsl \pi (U,V)) \right)}_i := \min_{\renewcommand{\arraystretch}{0.7} \begin{array}{c}
\scriptstyle{(D, E,\tau^a)} \in \\
\scriptstyle{(\GL{t}^{(t,\mathfrak{p})}(q) \times \GL{t_i}^{(t_i)}(q)) \rtimes \langle\tau^e\rangle  }                                                          \end{array}} D \tau^a\left( {( U^{(\mathfrak{P},\pi)}_i)}_{\Set{t},\Set{t_i}} \right) E^T
\end{equation*}
Again, this function is easily computable using the methods described in Subsection \ref{par:singletons}.

\subsubsection{Colored Incidence Graph}\label{subsubsec:incidence_graph}
Associated with the semicanonical representative $(U^{(\mathfrak{P},\pi)}, V^{(\mathfrak{P},\pi)})$ of a node is the bipartite Graph
$G$ with vertex set $\Set{n+h}$ and  edges

\begin{equation*}
\left\{ \{i,j\} \mid i \in \Set{n}, j \in (\Set{n+h} \setminus \Set{n}) : {v^{(\mathfrak{P},\pi)}_{j-n}}^T U_i^{(\mathfrak{P},\pi)} = 0 \right\} .
\end{equation*}

Using the partition $\mathfrak{P}$ we may cell-wise count the neighbors of a vertex $u \in \Set{n+h}$ which defines an $S_{\mathfrak{P}}$-homomorphism.

Finally, we have the possibility to color the edges of this graph as well. Therefore, we investigate the result of
${v^{(\mathfrak{P},\pi)}_j}^T U_i^{(\mathfrak{P},\pi)} \neq 0$ under the action of the inner stabilizer.
 For some arbitrary $A \in {\GL{k}^{(t,\mathfrak{p})}(q)}^T$,
$B:=\begin{pmatrix}
 E & B_1 \\  0 & B_2
 \end{pmatrix} \in \GL{s(i)}^{(t_i)}(q), b_j \in \F_q^\ast$ and $\tau^a \in \langle \tau^e \rangle$ we have

\begin{equation*}
 \left( {A^T}^{-1} \tau^a( v^{(\mathfrak{P},\pi)}_j) b_j\right)^T \left( A \tau^a(U_i^{(\mathfrak{P},\pi)}) \begin{pmatrix}
  E & B_1 \\ 0  & B_2
 \end{pmatrix}^T \right)=
b_j \tau^a\left(  {v^{(\mathfrak{P},\pi)}_j}^T U_i^{(\mathfrak{P},\pi)} \right)\begin{pmatrix}
  E & B_1 \\ 0  & B_2
 \end{pmatrix}^T
\end{equation*}
Now, substitute ${v^{(\mathfrak{P},\pi)}_j}^T U_i^{(\mathfrak{P},\pi)} = (w_1, w_2)$ with $w_1 \in \F_q^{t_i}$ and $w_2 \in \F_q^{s(i)-t_i}$. The action of $\Inn^{(\mathfrak{P},\pi)}$ changes the result of this product as given in the equation above. Hence, we can distinguish the edges (introduce colors) based on the orbits of $(w_1, w_2)$ under the group action of
$\left(\GL{s(i)}^{(t_i)} \times \F_q^{\ast}\right) \rtimes \langle \tau^e \rangle$.

Canonical representatives of these orbits could be easily computed using the following observation:
In the case that
$w_2 \neq 0$, we are able to find some matrix $B$ to map the vector $(w_1, w_2)$ onto the $(t_i+1)$-th unit vector.
In the case that $w_2$ is equal to $0$, we observe that $E$ is a diagonal matrix, hence the support of the vector $w_1$ is fixed by the application of this matrix.

Again, cell-wise counting of neighbors distinguished by the coloring of the edges defines an $S_{\mathfrak{P}}$-homomorphism.
The condition (\ref{eq:U_fix}) for the positions $i \in \Fixed{\mathfrak{P}_\C}$ furthermore restricts the diagonal matrices $E$ even more. In this case, it is even possible to give a refined coloring on the edges which allows the definition of a stronger $S_{\mathfrak{P}}$-homomorphism.

\begin{exmp}
Suppose that the partition $\mathfrak{P}$ of the actual node contains the cells $\{1,2\}$ and $\{1+n, 2+n\}$. Furthermore, the inner stabilizer is defined by $t=3, \mathfrak{p}=\{\{1\}, \{2,3\}\}$ and $e=2$. The example should be over $\F_4$ with $k=6$ and

\begin{align*}
(U_1, U_2, v_1, v_2) = \left(
\begin{array}{cccc:cccc||cc}
1&0&0&0& 1&0&0&0& 1&1 \\
0&1&0&0& 0&1&0&0& 1&0 \\
0&1&0&0& 0&1&0&0& 0&\xi \\ \hline
0&0&1&0&   1&\xi&1&\xi& 1&1 \\
0&\xi&0&1& 0&0&1&\xi^2& 1&0 \\
0&0&1&1&   1&\xi&1&\xi& 1&1
\end{array} \right)
\end{align*}
We first give the results of the multiplication $v_j^T U_i$:
\begin{center}
 \begin{tabular}{c|cc}
$(w^{(i,j)}_1, w^{(i,j)}_2)$  & i=1 & i=2 \\ \hline
j=1 & $(1,\xi^2,0,0)$ & $(1,1,1,\xi^2)$ \\
j=2 & $(1,\xi,0,1)$ & $(1,\xi,0,0)$
 \end{tabular}
\end{center}
There is exactly one entry in each column and each row whose
$w^{(i,j)}_2$ component is equal to zero. The support of the corresponding $w^{(i,j)}_1$ part is in both cases equal to $\{1,2\}$. Hence, the observation of the vertex- and edge-colored graph would not lead to refinement in this case.

In contrary, if the partition $\mathfrak{p}$ would be $\{\{1,2,3\}\}$ instead, we would observe that the matrices which are multiplied from the right to the elements $U_1$ and $U_2$ under the action of the inner stabilizer must be elements in $\GL{4}^{(2, \{1,2\})}(4)$. Hence, the orbits under this restricted action would be $\{(1,\xi^2,0,0), (\xi,1,0,0), (\xi^2,\xi,0,0)\}$ and $\{(1,\xi,0,0)$, $(\xi,\xi^2,0,0)$, $(\xi^2,1,0,0)\}$. Therefore, we would be able to color the edges $(1,1+n)$ and $(2,2+n)$ differently. This leads to a refinement of both cells.
\end{exmp}

\subsubsection{Iterative Refinements}
In the case that one of these refinements leads to a new singleton in the partition $\mathfrak{P}$, we use the inner minimization procedure of Section \ref{subsec:InnerMin} in order to get some smaller group $\Inn^{(\mathfrak{P},\pi)}$. The result of this minimization is compared with the candidate for the canonical form to prune the tree at an early stage.

In the case that the inner minimization leads to a smaller group $\Inn^{(\mathfrak{P},\pi)}$, we use the refinements described in Subsection \ref{subsub:innerMin_refine} immediately after the inner minimization procedure.

Since the refinement based on Subsection \ref{subsubsec:incidence_graph} is the most expensive, we try to avoid its application as long as possible. In the case that all other refinements fail and that we have updated $\Inn^{(\mathfrak{P},\pi)}$ by some smaller group or that we have replaced the partition $\mathfrak{P}$ by a finer partition since the last call of this function, the incidence graph may provide a refinement.

Rather than computing the $S_\mathfrak{P}$-homomorphism for all indices $i \in \Set{n+h}$ at once, we compute the result iteratively for each pair $P \in \mathfrak{P}_\H, Q \in \mathfrak{P}_\C$ of cells and call the refinement after each step. The ordering of the cells in this regard should be determined based on the information of all previous steps and we are not yet sure about the optimal strategy.

\subsection{The algorithm}\label{sec:Tree_revisited}

Algorithm \ref{alg:backtrack} gives a recursive description of the
backtrack tree generation and Figure \ref{img:backtracking}
visualizes this process. As we already mentioned, this tree is
traversed in a depth-first search approach. Therefore we store some
candidate for the canonical form in a global variable $(U^\Can,
V^\Can)$. This is the element which compares less than all other
leaf nodes already visited including all comparisons performed on
the paths to these nodes. At the end, this candidate is defined to
be the unique representative of the orbit. If this variable is
uninitialized ($= NIL$), the leaf node which will be visited next
becomes the candidate for the canonical form. Two further global
variables $A, T$ maintain the group of known automorphisms, which
could be used for further pruning the tree. The subroutine
\textsc{TargetCell} chooses the target cell in this step. Similarly,
\textsc{RefinementFunc} defines the next
$S_{\mathfrak{P}'}$-homomorphism which has to be applied. This
function also may return $NIL$ which indicates that the refinement
process should finish.

The function \textsc{InnerMinimization} implements the inner minimization as described in Section \ref{subsec:InnerMin}. In the case that the result of the minimization $(U', V')$ is smaller than $(U^\Can, V^\Can) \neq NIL$, this function sets the global variable
$(U^\Can, V^\Can)$ to $NIL$. On the other hand, if there is a smaller candidate for the canonical form, the function returns $is\_leq = \textbf{false}$. Otherwise, this flag is set to \textbf{true}. Similarly, \textsc{Refinement} implements the Homomorphism Principle for some given $S_{\mathfrak{P}'}$-homomorphism $f$. It updates the variables in the same way. Furthermore, it also calls the \textsc{InnerMinimization} function in the case that a further singleton appeared in $\mathfrak{P}'$.

\begin{algorithm}[tb]
  \caption[Backtrack]{\textsc{Backtrack}}
  \label{alg:backtrack}
  \begin{algorithmic}[1]
  \Require global variable $(U^\Can, V^\Can)$ -- candidate for the canonical form
  \Require global variable $\pi^\Can$ -- the permutation leading to the candidate
  \Require global variable $Aut \leq S_{n+h}$ -- the group of known automorphisms
  \Require global variable $T \subseteq S_{n+h}$ -- a left transversal of $Aut$ in $S_{n+h}$
  \Require $(\mathfrak{P}, \pi, U,V)$ a node of the backtrack tree
  \Ensure individualization-refinement step on $(\mathfrak{P}, \pi, U,V)$
\Procedure{Backtrack}{$\mathfrak{P}, \pi, U,V, \Inn$}
  \If{$\mathfrak{P}$ is discrete}
    \If{$(U^\Can, V^\Can) = NIL$}
      \State $(U^\Can, V^\Can,\pi^\Can) \leftarrow (U,V, \pi)$ \Comment{a new candidate}
    \Else
      \State $Aut \leftarrow \left\langle Aut, \pi^{-1}\pi^\Can \right\rangle$ \Comment{a new automorphism}
      \State $T \leftarrow$ left transversal of $Aut$ in $S_{n+h}$
    \EndIf
    \State \Return
  \EndIf

  \State $P \leftarrow \textsc{TargetCell}( \mathfrak{P}, \pi, \Inn )$, $m \leftarrow \operatorname{min}(P)$ \Comment{target cell selection}
  \State $\mathfrak{P}' \leftarrow (\mathfrak{P} \setminus P) \cup \{\{m\}, P \setminus \{m\}\}$
  \For{$j \in P$}\Comment{individualization}
    \State $t \leftarrow (m, j)$, $\pi' \leftarrow t\pi$, $(U',V' ) \leftarrow t(U, V)$
    \State $\left( is\_leq,(\mathfrak{P}', \pi', U',V', \Inn') \right) \leftarrow \textsc{InnerMinimization}(\mathfrak{P}', \pi', U',V', \Inn)$
    \State $f \leftarrow \textsc{RefinementFunc}(\mathfrak{P}', \Inn')$ \Comment{choose an $S_{\mathfrak{P}'}$-homomorphism}
    \While{$is\_leq$ \textbf{and} $f \neq NIL$}
    \If{$T \cap S_{\mathfrak{P}'}\pi = \emptyset$}\Comment{see \cite[Lemma 5.9]{Feu09}}
    \State \Return
    \EndIf
    \State $\left( is\_leq,(\mathfrak{P}', \pi', U',V', \Inn')\right)\leftarrow \textsc{Refinement}(f, \mathfrak{P}', \pi', U',V', \Inn')$
    \State $f \leftarrow \textsc{RefinementFunc}(\mathfrak{P}', \Inn')$
    \EndWhile
    \If{\textbf{not} $is\_leq$ }
      \State \Return \Comment{the actual candidate $(U^\Can, V^\Can)$ is smaller}
    \EndIf
    \State\textsc{Backtrack}($\mathfrak{P}', \pi', U',V', \Inn'$)
  \EndFor
\EndProcedure
\end{algorithmic}
\end{algorithm}

\begin{figure}[bt]
\includegraphics[width=\linewidth]{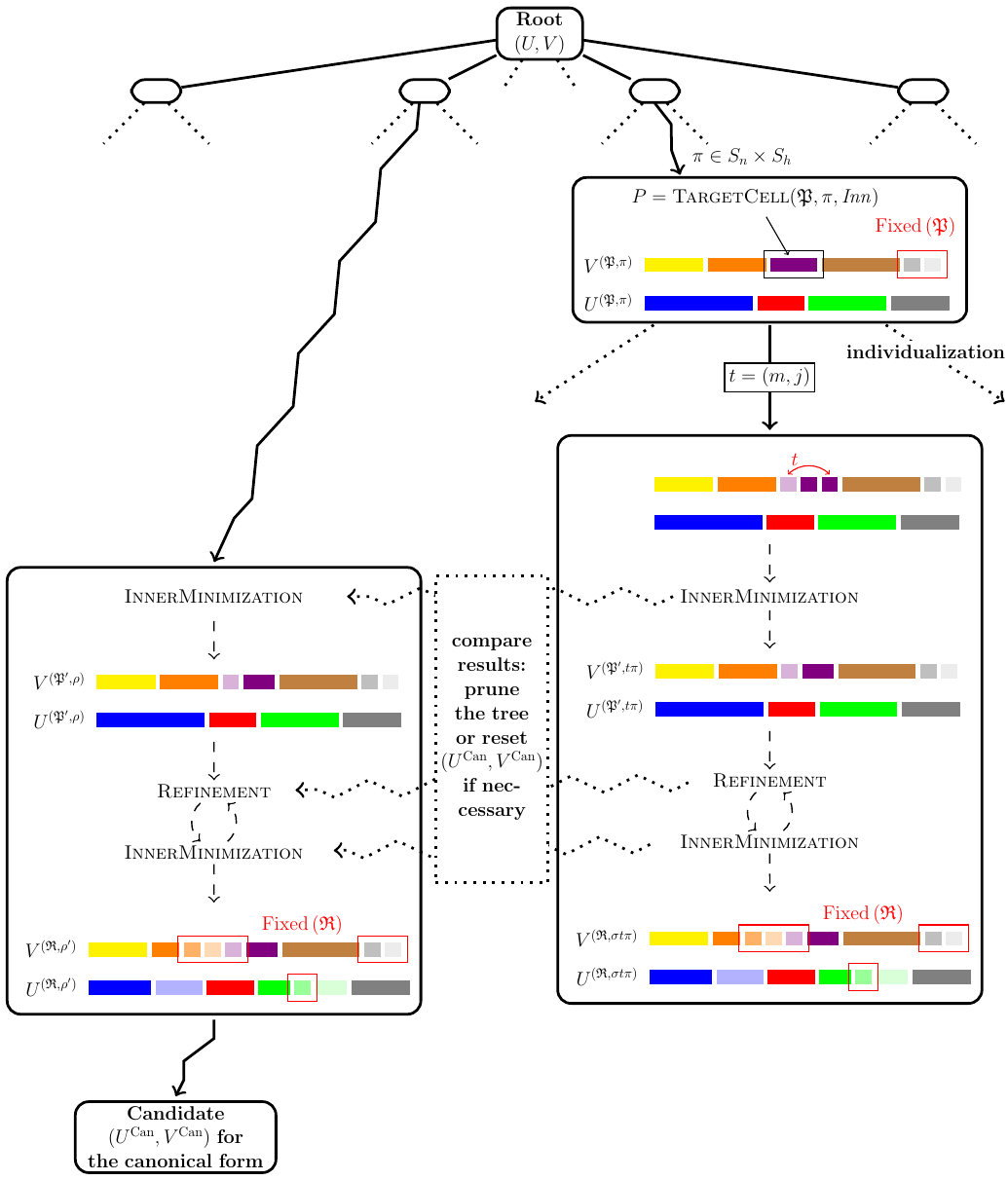}
\caption{Backtrack tree generation}\label{img:backtracking}
\end{figure}

During the backtracking it is not necessary to maintain a group
element $(A, B, b, \tau^a) \in \Gsl$, which maps the root node to
the semicanonical representative of the actual node. The permutation
$\pi \in S_{n+h}$ and the corresponding path to the leaf
$(\mathfrak{D},\pi)$ define an element $(A^{(\pi)}, B^{(\pi)},
b^{(\pi)}, \tau^{a^{(\pi)}}) \in \Gsl$ which maps the initial
sequence $(U, V)$ of the root node to the semicanonical
representative of this leaf. The element $(A^{(\pi)}, B^{(\pi)},
b^{(\pi)}, \tau^{a^{(\pi)}})$ is well defined up to the
multiplication by $\Inn^{(\mathfrak{D},\pi)}$ from the left and it
is only computed for some few leaves. Furthermore, since we are only
interested in a canonization map for $\C$, we may restrict this
computation to its $\GammaL{k}(q)$-component $\left(A^{(\pi)},
\tau^{a^{(\pi)}}\right)$.

Let $\pi^\Can \in S_{n+h}$ be the permutation leading to the canonical form
and let $\sigma_1, \ldots, \sigma_z \in S_{n+h}$ define generators of the automorphism group $Aut$ used for pruning the search tree.
The canonical form of $\Can_{\GammaL{k}(q)}(\C)$ is defined to be the sequence of subsets of subspaces given by the column spaces of $U_i^\Can$. A transporter element $\TR_{\GammaL{k}(q)}(\C)$ is given by $\left(A^{(\pi^\Can)}, \tau^{a^{(\pi^\Can)}}\right)$.

\begin{prop} With the help of the elements $\left(A^{(\pi^\Can\sigma_i)}, \tau^{a^{(\pi^\Can\sigma_i)}}\right)$, $i\in \Set{z}$, the group $\Inn^{(\mathfrak{D},\pi^\Can)}$
and $\left(A^{(\pi^\Can)}, \tau^{a^{(\pi^\Can)}}\right)$ we are able to compute generators of the automorphism group $\Aut(\C) \leq \GammaL{k}(q)$: $\Aut(\C)$ is generated by
\begin{flalign*}
\Pi_{\GammaL{k}}\left(\Inn^{(\mathfrak{D},\id)}\right) =
\left(A^{(\pi^\Can)}, \tau^{a^{(\pi^\Can)}}\right)^{-1} \Pi_{\GammaL{k}(q)}\left(\Inn^{(\mathfrak{D},\pi^\Can)}\right) \left(A^{(\pi^\Can)} , \tau^{a^{(\pi^\Can)}}\right)  \\
\textnormal{and }
\left\{\left(A^{(\pi^\Can)}, \tau^{a^{(\pi^\Can)}}\right)^{-1}\left(A^{(\pi^\Can\sigma_i)}, \tau^{a^{(\pi^\Can\sigma_i)}}\right) \mid i\in \Set{z} \right\}
\end{flalign*}
\end{prop}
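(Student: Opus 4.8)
The plan is to read $\Aut(\C)$ off the data produced by the backtracking, using the transporter--conjugation identity of the canonization framework together with the structure of the stabilizer of $(U,V)$ inside $\Gsl \rtimes S_{\mathfrak{P}_0}$. First I recall from Section~\ref{sec:reformulation} and Subsection~\ref{subsec:backtracking} that the map $\Pi_{\GammaL{k}}\colon \Gsl \rtimes S_{\mathfrak{P}_0} \rightarrow \GammaL{k}(q)$, $(A, B, b, \tau^a, \pi) \mapsto (A, \tau^a)$, is a surjective group homomorphism: its kernel $\left(\prod_{i} \GL{s_i}(q)^{n_i} \times (\F_q^\ast)^h\right) \rtimes S_{\mathfrak{P}_0}$ is normal because the $\Aut(\F_q)$-- and $S_{\mathfrak{P}_0}$--actions leave the $\GL{k}(q)$--component untouched, and every $\pi$--twisted commutator has trivial $\GL{k}(q)$-- and $\Aut(\F_q)$--part. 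Since $\H$ was appended through an $\GammaL{k}(q)$--homomorphism, the stabilizer of $(U,V)$ equals the stabilizer of $U$ alone, and under the identification of $\C$ with the orbit of $(U,V)$ one has $\Aut(\C) = \Pi_{\GammaL{k}}\bigl(\Stab{\Gsl \rtimes S_{\mathfrak{P}_0}}{(U,V)}\bigr)$. So it suffices to produce generators of $S := \Stab{\Gsl \rtimes S_{\mathfrak{P}_0}}{(U,V)}$ and push them through $\Pi_{\GammaL{k}}$.

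Next I would describe $S$. As $\Gsl \trianglelefteq \Gsl \rtimes S_{\mathfrak{P}_0}$, the group $S$ has the normal subgroup $N := S \cap \Gsl = \Stab{\Gsl}{(U,V)}$ with quotient $K := \Pi_{S_{\mathfrak{P}_0}}(S) \leq S_{\mathfrak{P}_0}$, hence $S = \langle N,\ \text{lifts of generators of } K\rangle$. The correctness of the pruned backtrack tree (the Theorem in Subsection~\ref{subsec:part_refinement} together with \cite[Lemma~5.9]{Feu09}) shows that the automorphism group maintained during the search is exactly this quotient, i.e.\ $K = Aut = \langle \sigma_1, \dots, \sigma_z\rangle$; the only new point here is the harmless relabelling caused by working in $S_{n+h}$ rather than $S_n$. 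Moreover each matching leaf $(\mathfrak{D}, \pi^\Can\sigma_i)$ supplies an explicit lift: its path yields an element $g_i := (A^{(\pi^\Can\sigma_i)}, B^{(\pi^\Can\sigma_i)}, b^{(\pi^\Can\sigma_i)}, \tau^{a^{(\pi^\Can\sigma_i)}})\,\pi^\Can\sigma_i$ of $\Gsl \rtimes S_{\mathfrak{P}_0}$ with $g_i(U,V) = (U^\Can, V^\Can)$, the same value attained by the transporter $g := (A^{(\pi^\Can)}, B^{(\pi^\Can)}, b^{(\pi^\Can)}, \tau^{a^{(\pi^\Can)}})\,\pi^\Can$; therefore $\ell_i := g^{-1}g_i \in S$ has $S_{\mathfrak{P}_0}$--part $\sigma_i$, and $S = \langle N, \ell_1, \dots, \ell_z\rangle$.

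Finally I would apply $\Pi_{\GammaL{k}}$. By Subsection~\ref{subsubsec:Inn_structure}, at the discrete leaf the inner stabilizer is the full $\Gsl$--stabilizer of the semicanonical representative, so $\Inn^{(\mathfrak{D},\pi^\Can)} = \Stab{\Gsl}{(U^\Can,V^\Can)}$; writing $g_0 := (A^{(\pi^\Can)}, B^{(\pi^\Can)}, b^{(\pi^\Can)}, \tau^{a^{(\pi^\Can)}})$ for the $\Gsl$--part of $g$ and using that $(U^\Can,V^\Can)$ arises from $(U,V)$ by first permuting with $\pi^\Can$ and then acting with $g_0$, one gets $\Inn^{(\mathfrak{D},\pi^\Can)} = g_0\,({}^{\pi^\Can}\!N)\,g_0^{-1}$, where ${}^{\pi^\Can}\!N$ is the image of $N$ under the $\pi^\Can$--twist on $\Gsl$. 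Since that twist only permutes the $\GL{s_i}(q)$-- and $\F_q^\ast$--coordinates, $\Pi_{\GammaL{k}}$ is invariant under it, whence
\[
\Pi_{\GammaL{k}}\bigl(\Inn^{(\mathfrak{D},\pi^\Can)}\bigr) = \bigl(A^{(\pi^\Can)},\tau^{a^{(\pi^\Can)}}\bigr)\,\Pi_{\GammaL{k}}(N)\,\bigl(A^{(\pi^\Can)},\tau^{a^{(\pi^\Can)}}\bigr)^{-1},
\]
which is precisely the claimed identity, the name $\Pi_{\GammaL{k}}(\Inn^{(\mathfrak{D},\id)})$ standing for $\Pi_{\GammaL{k}}(N)$. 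For the lifts, $\Pi_{\GammaL{k}}$ kills $S_{\mathfrak{P}_0}$--parts, so $\Pi_{\GammaL{k}}(\ell_i) = \bigl(A^{(\pi^\Can)},\tau^{a^{(\pi^\Can)}}\bigr)^{-1}\bigl(A^{(\pi^\Can\sigma_i)},\tau^{a^{(\pi^\Can\sigma_i)}}\bigr)$. Combining, $\Aut(\C) = \Pi_{\GammaL{k}}(S)$ is generated by $\Pi_{\GammaL{k}}(\Inn^{(\mathfrak{D},\id)})$ and the elements $\bigl(A^{(\pi^\Can)},\tau^{a^{(\pi^\Can)}}\bigr)^{-1}\bigl(A^{(\pi^\Can\sigma_i)},\tau^{a^{(\pi^\Can\sigma_i)}}\bigr)$, $i \in \Set{z}$.

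The step I expect to be the main obstacle is the identification of $K$ with the maintained group $Aut$ and the verification that the matching leaves really realise lifts of its generators (rather than only a subgroup): this is the heart of the correctness of partition--refinement canonization and is exactly what the soundness of the pruning rules of Subsection~\ref{subsec:part_refinement} and \cite[Lemma~5.9]{Feu09} guarantee; everything else is bookkeeping with the homomorphism $\Pi_{\GammaL{k}}$ and the $S_{\mathfrak{P}_0}$--twist on $\Gsl$.
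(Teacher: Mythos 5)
Your argument is correct, and it supplies in full the proof that the paper omits (the proposition is stated there without justification): the decomposition of $\Stab{\Gsl\rtimes S_{\mathfrak{P}_0}}{(U,V)}$ into its normal part $N=\Stab{\Gsl}{(U,V)}$ plus lifts of the maintained permutation group, followed by the projection $\Pi_{\GammaL{k}}$ and the conjugation by the transporter, is exactly the mechanism the paper intends. You also correctly isolate the one non-bookkeeping ingredient — that the group $Aut=\langle\sigma_1,\dots,\sigma_z\rangle$ found by the search equals the full image of the stabilizer in $S_{n+h}$, and that $\Inn^{(\mathfrak{D},\pi^{\Can})}$ is the full $\Gsl$-stabilizer of the leaf representative (which uses that the preprocessing makes $\H$ span, so $t=k$ at discrete leaves) — both of which rest on the correctness of the pruned backtracking as the paper asserts.
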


%

Note that $(\mathfrak{D},\pi^\Can\sigma_i)$ may not appear as a node of the pruned search tree because of the pruning based on the group of known automorphisms. In this case, we still know in which order we have to apply the methods from Section \ref{subsec:InnerMin} to compute its semicanonical representative, since it defines the same canonical form and hence the inner minimization has to follow the same rules than the computation of $\left(A^{(\pi^\Can)} , \tau^{a^{(\pi^\Can)}}\right)$.

\begin{thm}
The mapping $\C \mapsto \left( \Can_{\GammaL{k}(q)}(\C), \TR_{\GammaL{k}(q)}(\C), \Aut(\C) \right)$, where $\Can_{\GammaL{k}(q)}$ and
$\TR_{\GammaL{k}(q)}$ are defined as above, solves the canonization problem and Algorithm \ref{alg:backtrack} is a practical algorithm to compute the data.
\end{thm}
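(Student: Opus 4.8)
The plan is to reduce the assertion to results already established in the excerpt and then to check that the data maintained by Algorithm~\ref{alg:backtrack} realises exactly the abstract backtracking scheme of Subsection~\ref{subsec:part_refinement}. First I would recall from Section~\ref{sec:reformulation} and Subsection~\ref{subsec:backtracking} that canonizing $\C$ under $\GammaL{k}(q)$ is equivalent to canonizing the pair $(U,V)\in\prod_{i=1}^{m}(\Mat{k}{s_i})^{n_i}\times(\F_q^{k})^{h}$ under the action of $\Gsl\rtimes S_{\mathfrak{P}_0}$, where the $v_j$ are normal vectors of the preprocessing hyperplane set $\H$ and the action on them is the dual one of Proposition~\ref{prop:actionDual}; by the proposition of Subsection~\ref{subsec:backtracking} on appending a $G$-homomorphic image, the $\H$-part may be carried along without altering the canonization problem for $\C$. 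Hence it suffices to prove that the backtracking computes $\Can_{\Gsl\rtimes S_{\mathfrak{P}_0}}(U,V)$, after which the canonical form of $\C$ is read off as the sequence of column spaces of the $U_i^{\Can}$, the transporter $\TR_{\GammaL{k}(q)}(\C)$ as the $\GammaL{k}(q)$-component $(A^{(\pi^\Can)},\tau^{a^{(\pi^\Can)}})$ of the path transporter, and $\Aut(\C)$ from the generating set given in the preceding proposition.

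Next I would interpret the backtracking, exactly as in Remark~\ref{rem:backtrack_interpretation} for linear codes, as the recursive alternation of Subsection~\ref{subsec:part_refinement}: the individualization step is the lifting approach for $S_{\mathfrak{P}'}=\Stab{S_{\mathfrak{P}}}{m}$ with a transversal of size $|P|$, and every other operation at a node is an instance of the Homomorphism Principle (Theorem~\ref{thm:Hom_Principle}). Then I would verify one by one that each map used has the required type: the target cell selection is an $S_{\mathfrak{P}}$-invariant; the semicanonical-representative computation of Subsection~\ref{subsec:InnerMin} is a composition of homomorphisms of group actions whose image stabilizers are standard Young subgroups, which is precisely the structural description of Subsection~\ref{subsubsec:Inn_structure} together with Corollary~\ref{cor:ext_inner} and Proposition~\ref{prop:row_fix}; and the refinement maps of Subsections~\ref{subsec:refinements}, \ref{subsub:innerMin_refine} and~\ref{subsubsec:incidence_graph} (colored incidence graph, predicted inner minima, and the embedded $\operatorname{RCEF}$ refinement of Remark~\ref{rem:subset}) are $S_{\mathfrak{P}}$-homomorphisms. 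Once this is in place, the theorem of Subsection~\ref{subsec:part_refinement} yields that the last accepted leaf is the minimum of the $\Gsl\rtimes S_{\mathfrak{P}_0}$-orbit of $(U,V)$ with respect to the total ordering $\leq_{F}$ induced by the sequence of applied homomorphisms, that this minimum does not depend on the chosen orbit representative, and that the comparison of leaves together with the inner stabilizer $\Inn^{(\mathfrak{D},\pi^\Can)}$ at the final leaf produces generators of the stabilizer; the two pruning mechanisms --- comparison of partial images and the group of known automorphisms via \cite[Lemma 5.9]{Feu09} --- are exactly those justified in Subsection~\ref{subsec:part_refinement} and never discard the path to this minimum.

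Then I would translate back to $\C$. The three defining properties of the canonization problem for $\C$ follow from the corresponding properties for $(U,V)$ by dropping the $\H$-components and replacing each $U_i$ by its column space, using that this passage is $\GammaL{k}(q)$-equivariant; the identity $\CF_{\GammaL{k}(q)}(\C)=\TR_{\GammaL{k}(q)}(\C)\,\C$ holds because the path to $\pi^\Can$ records, step by step, an element of $\Gsl\rtimes S_{\mathfrak{P}_0}$ carrying the root node to the accepted leaf, whose $\GammaL{k}(q)$-component is $(A^{(\pi^\Can)},\tau^{a^{(\pi^\Can)}})$; and the automorphism group is correct by the preceding proposition, noting that $(\mathfrak{D},\pi^\Can\sigma_i)$ need not appear as an explicit node of the pruned tree since its semicanonical representative is determined by the same rules as that of $(\mathfrak{D},\pi^\Can)$. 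For the second assertion, practicality reduces to termination --- the search tree is a subtree of the finite tree over $S_{n+h}$ --- together with the effectiveness of every subroutine invoked: each $S_{\mathfrak{P}}$-homomorphism and the inner minimization are computed by the elementary linear-algebra procedures of Subsections~\ref{subsec:InnerMin}--\ref{subsec:refinements}, and the post-processing of Subsection~\ref{sec:Tree_revisited} extracts the triple $(\Can_{\GammaL{k}(q)}(\C),\TR_{\GammaL{k}(q)}(\C),\Aut(\C))$; the empirical behaviour on nontrivial instances is deferred to Section~\ref{sec:Applications}.

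The step I expect to be the main obstacle is the verification that the inner-minimization block genuinely fits the Homomorphism-Principle template. It is a composition of several homomorphisms of group actions (the map fixing $V_{\ast,\FixedSeq{\mathfrak{R}_\H}}$, the map fixing the first $t$ rows of the $U_i$ for $i\in\FixedSeq{\mathfrak{R}_\C}$, and the $\operatorname{RCEF}$ step), its output stabilizer must be shown to be exactly the parametrised group inside ${\GL{k}^{(t,\mathfrak{p})}(q)}^T\times\prod_{i=1}^{n}\GL{s(i)}^{(t_i)}(q)\times(\F_q^\ast)^h$ cut out by conditions (\ref{eq:v_fix}) and (\ref{eq:U_fix}), and --- because of Remark~\ref{rem:subset} --- it simultaneously induces a refinement of $\mathfrak{P}_\C$, so one must argue that this embedded refinement is $S_{\mathfrak{P}}$-equivariant, consistent with the coset data $(\mathfrak{P},\pi)$ stored at the node, and compatible with the dual action on the hyperplane part throughout the recursion.
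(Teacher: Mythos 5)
Your proposal is correct and follows essentially the same route the paper takes: the paper states this theorem without a separate proof, relying implicitly on exactly the chain you assemble — the reduction of $\C$ to the pair $(U,V)$ under $\Gsl\rtimes S_{\mathfrak{P}_0}$, the proposition on appending the $G$-homomorphic image $\H$, the general pruned-backtrack-tree theorem of Subsection~\ref{subsec:part_refinement} applied via Remark~\ref{rem:backtrack_interpretation}, the verification that the inner minimization and the refinement maps are homomorphisms of group actions of the required type, and the preceding proposition for recovering $\Aut(\C)$. If anything, your explicit flagging of the inner-minimization block as the step needing the most care goes beyond what the paper spells out.
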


\section{Applications}\label{sec:Applications}

In \cite{Feu09, FeuZ4} we gave running times for the computation of
the automorphism groups of a family of almost perfect nonlinear
(APN-) function $f^{(d)}: \F_2^d \rightarrow \F_2^d, x \mapsto x^3$.
These computations were done using a reformulation as a linear code
$C_f^{(d)} \subseteq \F_2^{2^d}$ of dimension $2d+1$.

\begin{defn}
A set $\C$ of $d$-dimensional subspaces in $\PS{q}{k+1}$ with $\vert \C \vert = 1 + (q^{d+1}-1)/(q-1)$ is called a
$(d-1)$-dimensional dual hyperoval if
the intersection of any two distinct elements of $\C$ is a point and
any three have an empty intersection.
\end{defn}

\cite{Yoshiara} gives a construction of $(d-1)$-dimensional dual
hyperovals $\C_f^{(d)}$ in $\PS{2}{2d}$ using quadratic
APN-functions $f^{(d)} : \F_2^{d} \rightarrow \F_2^{d}$. Again we
use the quadratic function $x \mapsto x^3$ to produce a family of
subsets $\C_f^{(d)}$ to test our algorithm. By \cite{EdelDempwolff}
we know that the automorphism group of $\C_f^{(d)}$ and $C_f^{(d)}$
are identical for $d \geq 4$. This allows us on the one hand to test
the algorithm for correctness and on the other to compare its
performance with the algorithm for linear codes. Table
\ref{tab:RunningTimes} shows the running times for different $d$ on
a single core of a 2.4 GHz Intel Quad 2 processor.

\begin{table}
\begin{center}
\begin{tabular}{ccc|c:c|cc}
$k=2d$ & $s=d$ & $n=2^d$ & $h$ & $\vert \Aut(\C_f^{(d)}) \vert$ & time $\C_f^{(d)}$ & time $C_f^{(d)}$\\ \hline
6 & 3 & 8 & 28 & 1344 & 0.1 s & 0.1 s \\
8 & 4 & 16 & 20 & 5760 & 0.1 s & 0.1 s \\
10 & 5 & 32 & 496 & 4960 & 0.5 s & 0.1 s \\
12 & 6 & 64 & 336 & 24192 & 0.2 s & 0.1 s\\
14 & 7 & 128 & 8128 & 113792 & 3 s & 0.3 s \\
16 & 8 & 256 & 5440 & 522240 & 2.5 s & 0.3 s\\
18 & 9 & 512 & 130816 & 2354688 & 4 min & 45 s \\
20 & 10 & 1024 & 87296 & 10475520  & 2 min & 6 s \\
22 & 11 & 2048 & 2096128 & 46114816 & 8 h & 4 h \\
24 & 12 & 4096 & 1397760 & 201277440 & 8 h & 6 min
 \end{tabular}
\caption{Running times for $\C_f^{(d)}$ compared to $C_f^{(d)}$ for
$f(x)=x^3$}
\end{center}\label{tab:RunningTimes}
\end{table}

\section{Conclusion}\label{sec:Conclusion}
This works presents a practical algorithm which solves the
canonization problem for sequences of subsets of $\PS{q}{k}$. From
the reduction to the graph isomorphism problem, we know that we
could not expect to give an algorithm that runs in polynomial time.

The algorithm itself relies on many heuristics, for instance the
choice of the target cell, the choice of the homomorphism of group
actions which has to be applied next or when to stop the refinements
since they are more expensive than performing an individualization
step. These problems are well-known from the canonization of graphs
where possible modifications of the basic algorithm are discussed in
several papers.

Similarly, the improvement of these heuristics is still part of our
current research and we are not yet sure about an optimal strategy.
This is also the reason of not giving the full implementation
details in this regard.

\begin{ack}
The author thanks Anna-Lena Trautmann and Axel Kohnert for many helpful discussions and comments. Furthermore, he would like to thank Yves Edel for the hint on the connection to additive codes and the examples for testing the implementation of the algorithm.
\end{ack}

\bibliographystyle{plain}
\bibliography{../../bibliography}

\begin{thebibliography}{10}

\bibitem{Ashikhmin_Knill}
A.~Ashikhmin and E.~Knill.
\newblock Nonbinary quantum stabilizer codes.
\newblock {\em IEEE Trans. Inform. Theory}, 47(7):3065--3072, 2001.

\bibitem{Kerber}
A.~Betten, M.~Braun, H.~Fripertinger, A.~Kerber, A.~Kohnert, and A.~Wassermann.
\newblock {\em {Error-correcting linear codes. Classification by isometry and
  applications. With CD-ROM.}}
\newblock {Algorithms and Computation in Mathematics 18. Berlin: Springer.
  xxix, 798~p.}, 2006.
\newblock \url{http://linearcodes.uni-bayreuth.de}.

\bibitem{EdelDempwolff}
Y.~{Edel} and U.~{Dempwolff}.
\newblock Dimensional dual hyperovals and {APN} functions with translation
  groups (preprint).
\newblock {\em
  \url{http://www.mathi.uni-heidelberg.de/~yves/Papers/eqdho.html}}, 2012.

\bibitem{Feu09}
T.~Feulner.
\newblock The automorphism groups of linear codes and canonical representatives
  of their semilinear isometry classes.
\newblock {\em Adv. Math. Commun.}, 3(4):363--383, 2009.

\bibitem{FeuZ4}
T.~Feulner.
\newblock Canonization of linear codes over {$\mathbb{Z}_4$}.
\newblock {\em Adv. Math. Commun.}, 5(2):245--266, 2011.

\bibitem{Gugisch}
R.~Gugisch.
\newblock {\em {Construction of isomorphy classes of oriented matroids.
  (Konstruktion von Isomorphieklassen orientierter Matroide.)}}.
\newblock PhD thesis, {Bayreuther Mathematische Schriften 72. Bayreuth: Univ.
  Bayreuth, Mathematisches Institut; Bayreuth: Univ. Bayreuth, Fakult\"at f\"ur
  Mathematik und Physik (Dissertation). x, 130~p. }, 2005.
\newblock \url{http://opus.ub.uni-bayreuth.de/volltexte/2006/229/}.

\bibitem{ko08}
R.~K\"otter and F.~R. Kschischang.
\newblock Coding for errors and erasures in random network coding.
\newblock {\em Information Theory, IEEE Transactions on}, 54(8):3579--3591,
  August 2008.

\bibitem{Laue}
R.~Laue.
\newblock Constructing objects up to isomorphism, simple 9-designs with small
  parameters.
\newblock In {\em Algebraic combinatorics and applications ({G}\"o\ss
  weinstein, 1999)}, pages 232--260. Springer, Berlin, 2001.

\bibitem{Leon:Code}
J.~S. Leon.
\newblock {Computing automorphism groups of error-correcting codes.}
\newblock {\em IEEE Trans. Inf. Theory}, 28:496--511, 1982.

\bibitem{McKay:graphIsomorphism}
B.~D. McKay.
\newblock {Practical graph isomorphism.}
\newblock {Numerical mathematics and computing, Proc. 10th Manitoba Conf.,
  Winnipeg/Manitoba 1980, Congr. Numerantium 30, 45-87 (1981).}, 1981.

\bibitem{PetrankRoth}
Erez Petrank and Ron~M. Roth.
\newblock Is code equivalence easy to decide?
\newblock {\em IEEE Transactions on Information Theory}, 43:1602--1604, 1997.

\bibitem{Piperno}
A.~Piperno.
\newblock Search space contraction in canonical labeling of graphs (preliminary
  version).
\newblock {\em \url{http://arxiv.org/abs/0804.4881v2}}, 2011.

\bibitem{traut11}
A.-L. {Trautmann}, F.~{Manganiello}, M.~{Braun}, and J.~{Rosenthal}.
\newblock Cyclic orbit codes.
\newblock {\em \url{http://arxiv.org/abs/1112.1238v1}}, December 2011.

\bibitem{Yoshiara}
Satoshi Yoshiara.
\newblock Dimensional dual hyperovals associated with quadratic {APN}
  functions.
\newblock {\em Innov. Incidence Geom.}, 8:147--169, 2008.

\end{thebibliography}


\end{document}